\documentclass[reqno]{amsart}

\usepackage{amsmath,amssymb,amsfonts,amsthm}
\providecommand{\texorpdfstring}[2]{#1}
\usepackage{hyperref}
\usepackage{cleveref}
\usepackage{xcolor}
\usepackage{longtable,array}
\usepackage[numbers,sort&compress]{natbib}

\numberwithin{equation}{section}

\newtheorem{theorem}{Theorem}[section]
\newtheorem{lemma}[theorem]{Lemma}
\newtheorem{proposition}[theorem]{Proposition}
\newtheorem{corollary}[theorem]{Corollary}
\newtheorem{conjecture}[theorem]{Conjecture}
\newtheorem{remark}[theorem]{Remark}
\newtheorem{example}[theorem]{Example}

\newcommand{\R}{\mathbb{R}}

\newcommand{\D}{\mathrm D}

\title[Two $(1+1)$D systems from 3D Euler]{Finite-time blow-up of two $(1+1)$D systems rigorously derived from the 3D axisymmetric Euler equations}
\author{Yaoming Shi}
\address{California, United States}
\email{ymshi@protonmail.com}
\date{April 21, 2026}

\subjclass[2020]{35B44, 35B40, 35Q86, 76B03, 76D05}
\keywords{3D 3D axisymmetric Euler, Constantin--Lax--Majda type model, finite-time blow-up, ridge reduction, ray dynamics, conditional perturbative control, weighted Sobolev energy}
\begin{document}

\begin{abstract}
We study two $(1+1)$-dimensional systems, denoted $(R0)$ and $(Z0)$, that are rigorously derived from the three-dimensional axisymmetric Euler equations in a signed polar formulation on the meridian plane. The main point of view in this revision is that these $(1+1)$D systems are not ad hoc model equations and not merely ``symmetry-axis reductions.'' Rather, they arise as exact symmetry-axis/apex restrictions of the full $(1+2)$D system~$(E2)$ obtained from 3D axisymmetric Euler, and they already contain the core finite-time singularity mechanism of the full problem.

The rev3 geometry is based on the symmetry axes
\[
\theta=0,\qquad \theta=\pm \frac{\pi}{2},
\]
for which ridge flatness is preserved automatically by the evenness in $(r,z)$. Along these axes, and in particular at the apex $x^2=r^2+z^2=0$, the reduced dynamics closes exactly. This yields two rigorously derived $(1+1)$D systems: the horizontal-axis system $(R0)$ and the vertical-axis system $(Z0)$. The apex trace of these systems reduces further to a closed ODE of Constantin--Lax--Majda type, from which we obtain finite-time blow-up at the coordinate origin.

The paper has three main outputs. First, it derives the signed-polar $(1+2)$D subsystem~$(E2)$ from the 3D axisymmetric Euler equations and identifies the exact $(1+1)$D systems $(R0)$ and $(Z0)$ carried by the symmetry axes. Second, it proves finite-time blow-up for the resulting apex dynamics and analyzes the associated convective axis reduction. Third, it derives the exact background--remainder equations and formulates a conditional nonlinear stability mechanism: if a compatible full background exists on $[0,T)$ with the coefficient bounds required by the weighted energy method, then the full solution inherits the same finite-time apex blow-up.

In this way, the manuscript isolates the central unresolved step very clearly. What is already rigorous is the derivation of $(E2)$ from 3D Euler, the exact derivation of the two $(1+1)$D systems $(R0)$ and $(Z0)$ from $(E2)$, the closed apex blow-up mechanism, and the perturbative conditional framework. What remains open is the construction of a full rev3 background away from the apex with the bounds needed to close the nonlinear bootstrap unconditionally.
\end{abstract}

\maketitle

\section{Introduction}\label{sec:intro}

The formation of finite-time singularities for the three-dimensional incompressible Euler equations with swirl, remains one of the central open problems in mathematical fluid dynamics. In this paper we study a closed $(1+2)$-dimensional subsystem $(E2)$, rigorously derived from the 3D axisymmetric Euler equations in velocity--pressure form under a parity ansatz on the meridian plane. Our emphasis in this revision is on two exact $(1+1)$D systems, denoted $(R0)$ and $(Z0)$, which are rigorously derived from $(E2)$ by restricting to the distinguished symmetry axes. Our aim is twofold: first, to identify a precise apex finite-time blow-up mechanism already visible in these exact $(1+1)$D descendants of the 3D Euler system; second, to formulate a perturbative stability theory around compatible backgrounds that is mathematically solid at the linear level and explicit about the remaining nonlinear obstruction.

A central advantage of the pressure--velocity formulation is that the divergence-free condition remains visible throughout the reduction and the symmetry-axis geometry is revealed directly. In the rev3 setup, the evenness in $(r,z)$ propagates and therefore preserves the ridge-flatness condition automatically on the axes $\theta=0,\pm \frac{\pi}{2}$. Although the convective terms do not vanish identically on those axes away from the origin, the apex dynamics at $x=0$ is closed and decoupled from the off-apex region. This yields two exact $(1+1)$D systems $(R0)$ and $(Z0)$ from the full Euler-derived subsystem $(E2)$. In our view, this exact derivation of $(R0)$ and $(Z0)$ is one of the main rigorous outputs of the paper, and it deserves to be emphasized at least as strongly as the later conditional stability framework. The resulting analysis is therefore best described as a pressure--velocity approach to finite-time blow-up for two $(1+1)$D systems rigorously derived from the 3D axisymmetric Euler equations.

We call the Hou--Li type variables $\{u,v,g\}$ of~\eqref{eq:uvg-def} the \textbf{building blocks of vorticity}, because their physical dimensions agree with that of $\boldsymbol{\omega}=\nabla\times\boldsymbol{u}$. In these variables, the quadratic stretching terms also simplify to $(uv,\, v^2-u^2,\, -g^2)$, which makes the CLM-type reaction structure transparent.

\paragraph{\textbf{Related work and context.}}
As emphasized by Elgindi--Jeong~\citep{EJ2019,EJ2020}, Chen--Hou~\citep{CH2021,CH2022}, and Drivas--Elgindi~\citep{DE2023}, singularity formation for the 3D Euler equations and their axisymmetric reductions has a long history. We recall only a small selection of representative references here, emphasizing works most closely aligned with the blow-up mechanism and stability framework developed below. Classical continuation criteria include Beale--Kato--Majda~\citep{BKM984} and the survey perspectives of Constantin~\citep{C1986,C2007}. On the modeling side, explicit and didactic mechanisms include Constantin--Lax--Majda~\citep{CLM1985}, De~Gregorio~\citep{DeGregorio1990}, Chae--Constantin--Wu~\citep{CCW2014}, and the one-dimensional axisymmetric Euler model of Choi--Hou--Kiselev--Luo--\v{S}ver\'ak--Yao~\citep{CHKLSY2017}. For rigorous singularity constructions and perturbative stability scenarios in PDE settings, see Elgindi--Jeong~\citep{EJ2019,EJ2020}, Chen--Hou~\citep{CH2021,CH2022}, and the synthesis of Drivas--Elgindi~\citep{DE2023}. In comparison with those works, the present paper adopts a different viewpoint: it starts from the pressure--velocity form, works with smooth functions on the full reduced-plane geometry associated with the symmetry reduction, and exploits symmetry rather than boundary effects or lower-regularity singular norms in the handling of convection. The benefit of this viewpoint is that the divergence-free structure and the symmetry-axis geometry remain directly visible in the reduced equations, which in turn makes the exact ridge-dynamics reduction transparent.

\medskip
\noindent\textbf{Main achievements.}
\begin{itemize}
	\item[(0)] We derive the closed subsystem (E2) exactly from the 3D axisymmetric Euler equations under a parity ansatz and identify the variables $\{u,v,g\}$ as convenient vorticity building blocks.
	\item[(1)] We rigorously derive from the full Euler-based subsystem $(E2)$ two exact $(1+1)$D systems, $(R0)$ and $(Z0)$, carried by the symmetry axes. These systems are not model approximations but exact axis restrictions of the 3D axisymmetric Euler reduction.
	\item[(2)] We show that the pressure--velocity form reveals the divergence-free structure and the symmetry-axis geometry in a way that is compatible with the exact ray reduction.
	\item[(3)] We derive the exact remainder equations around a prescribed background in the $(x,\theta)$ variables, with all pure-background contributions retained in the background system.
	\item[(4)] We prove weighted singular linear estimates and formulate a conditional nonlinear remainder theorem of Elgindi type: once a compatible background is available with the required coefficient bounds, blow-up transfers from its apex dynamics to the full solution.
	\item[(5)] We isolate the remaining open step in the program, namely the construction and control of a full background away from the apex together with the compatibility structure needed to close the nonlinear bootstrap.
\end{itemize}

\medskip
\noindent\textbf{Organization.}
\paragraph{Organization.}
Section~\ref{sec:derivation} derives the signed-polar $(1+2)$D subsystem $(E2)$ from the 3D axisymmetric Euler equations and identifies the exact symmetry-axis reductions. Section~\ref{sec:horizontal-dynamics} studies the closed apex ODE system $(e3)$ and proves its finite-time blow-up profile. Section~\ref{sec:e4} treats the convective axis reduction $(e4)$ and shows that the same explicit apex blow-up persists at $x=0$. Section~\ref{sec:remainder-derivation} derives the exact background--remainder system around a prescribed background. Section~\ref{sec:energy-bounds} records the background coefficient bounds and late-time scales needed for the perturbative argument. Section~\ref{sec:initial-boundary-conditions} specifies the initial and boundary conditions imposed on the remainder variables. Section~\ref{sec:stability} proves the weighted energy inequalities and formulates the conditional blow-up transfer mechanism. Section~\ref{sec:conclusion} summarizes the rigorous results already obtained and isolates the remaining gap to a full 3D Euler blow-up theorem. Section~\ref{sec:acknowledgements} records acknowledgements and provenance remarks.

\section{The derivation of system (E2) from the 3D axisymmetric Euler equations}\label{sec:derivation}
\subsection{Velocity--pressure formulation and Hou--Li type variables}\label{sec:E2-derivation}
In this section we convert the velocity--pressure form of the 3D axisymmetric Euler equations (see, for example, Chae--Lee~\citep{CL2002}, Drazin--Riley~\citep{DR2006}, Chen--Fang--Zhang~\citep{CFZ2015}) into a new formulation in terms of vorticity building blocks. In this form, the structure of the vortex-stretching and convection terms becomes transparent, which makes the study of the apex blow-up mechanism in compressed coordinates more tractable.

In the velocity-pressure form, the 3D axisymmetric Euler equations on the semimeridian plane $(r\geq 0, z\in\R)$ are given by\\
\\
\begin{minipage}{1.0\textwidth}
	\begin{equation}\label{eq:Euler}
		\left\{
		\begin{aligned}
			&0=\tfrac{\mathrm{\tilde{D}}}{\mathrm{D}t}v^\phi
			+\tfrac{1}{r}v^r v^\phi,\quad t\in[0,T),\,\, r\geq0,\,\, z\in\R\\
			&0=\tfrac{\mathrm{\tilde{D}}}{\mathrm{D}t}v^r
			-\tfrac{1}{r}v^\phi v^\phi+\partial_r P\\
			&0=\tfrac{\mathrm{\tilde{D}}}{\mathrm{D}t} v^z
			+\partial_z P\\
			&0=\partial_r(rv^r)+\partial_z (rv^z),\\
			&\tfrac{\mathrm{\tilde{D}}}{\mathrm{D}t}:=\partial_t+v^r \partial_r+
			v^z \partial_z\\
		\end{aligned}
		\right.
	\end{equation}
\end{minipage}\\

The Poisson pressure equations is given by:

\begin{equation}\label{eq:PPE-0}
\begin{aligned}
-\bigl({\partial_r^2 P}+\tfrac{1}{r}\partial_r P+{\partial_z^2 P}\bigr)
&=2\left[\left({\partial_r v^r}\right)^2+
{\partial_z v^r}{\partial_r v^z}
+\left({\partial_z v^z}\right)^2\right]\\
&\quad+\left[
\left({\partial_r v^\phi}\right)^2
+
\left({\partial_z v^\phi}\right)^2
+
\tfrac{1}{r^2}(v^\phi)^2
\right]
\end{aligned}
\end{equation}

Assume now that $\left(v^{\phi},v^{r}\right)$ are odd in $r$ and even in $z$, while $v^z$ is even in $r$ and odd in $z$, and $P$ is even in $(r,z)$. Define the Hou--Li~\citep{HLi2006} type variables by
\begin{equation}\label{eq:uvg-def}
	\{u,v,g,p\}:=\left\{\frac{v^{\phi}}{r},\frac{v^r}{r},\frac{v^z}{z},P\right\}.
\end{equation}
Then \eqref{eq:Euler} can be converted to the $(m=2)$ version following system:\\

\begin{equation}\label{eq:E2}
	\left\{
	\begin{aligned}
		&\tfrac{\mathrm{D}}{\mathrm{D}t}u
				=\tfrac12m^2uv,\qquad\qquad t\in[0,T), (r,z)\in\R^2\\
		&\tfrac{\mathrm{D}}{\mathrm{D}t}v
				=v^2-u^2+\tfrac{1}{r}p_r\\
		&\tfrac{\mathrm{D}}{\mathrm{D}t}g
		=-g^2\,\,\,\,\,-\tfrac{1}{z}p_z\\
		&z\partial_z g-r\partial_r v+g-mv=0.\\
		&\tfrac{\mathrm{D}}{\mathrm{D}t}:=\partial_t+v r\partial_r+g z\partial_z.
	\end{aligned}
	\right.
\end{equation}
The  Poisson pressure equation then becomes:
\begin{equation}\label{eq:PPE-2}
	\left\{\begin{aligned}
		-\bigl(p_{rr}+\tfrac1r p_r+p_{zz}\bigr)&=2 (u^2+v^2+g^2)+r (u^2)_{r}+2r (v^2)_{r}+2z (g^2)_{z}\\
		&+(ru_{r})^2+2(rv_{r})^2+2(z g_{z})^2+2 r g_{r}\, zv_{z}+(ru_{z})^2.
	\end{aligned}\right.
\end{equation}

\smallskip
\begin{remark}\label{rem:symetric-in-rz}
	From the inspection of \eqref{eq:E2} and \eqref{eq:PPE-2}, we notice that if the initial conditions for $\{u,v,g\}$ are symmetric in $(r,z)$, then the initial condition for $p$ is also symmetric in $(r,z)$.  Therefore the PDE system preserves these symmetry properties. In this sense, we regard ~\eqref{eq:E2} as being defined on $\R^2$.
\end{remark}

\begin{remark}
	We will call $\{u,v,g\}$ the building blocks of vorticity, because their units are equal to the unit of $\boldsymbol{\omega}=\nabla \times\boldsymbol{v}$.
	Also the quadratic vortex stretching terms are greatly simplified: $(uv,\ v^2-u^2,\ -g^2)$.
\end{remark}

\begin{remark}
	We regard \eqref{eq:E2} as a two-dimensional Eulerian analogue of the Constantin--Lax--Majda equations \citep{CLM1985}. In the case $(p=0,m=2)$, the first two equations in \eqref{eq:E2} reduce to the Constantin--Lax--Majda system after the identification $\tfrac{\mathrm{D}}{\mathrm{D}t}=\tfrac{\partial}{\partial t}$ and $u(t,r,\cdot)=\tfrac12\omega(t,r)$, $v(t,r,\cdot)=\tfrac12H(\omega)(t,r)$, or $u(t,\cdot,z)=\tfrac12\omega(t,z)$, $v(t,\cdot,z)=\tfrac12H(\omega)(t,z)$.\\
	
	\begin{minipage}{1.0\textwidth} 
		\begin{equation}\label{eq:CLM2}
			\left\{\begin{aligned}
				&u_t=
				2vu,\qquad\qquad x\in\mathbb{R}\\
				&v_t
				=v^2-u^2.\\
			\end{aligned}\right.
		\end{equation}
	\end{minipage}\\
	
	In the Constantin--Lax--Majda equations, $v=\tfrac12H(\omega)$ is a function of $u=\tfrac12\omega$. In \eqref{eq:E2}, $v$ is independent of $u$.
\end{remark}

We now present the explicit finite-time blow-up solutions of the Constantin--Lax--Majda system as a benchmark for further comparison.

Constantin, Lax, and Majda converted \eqref{eq:CLM2} into the scalar complex ODE with dependent variable $z(t,x)=v(t,x)+i\,u(t,x)$ and found the explicit solution:\\

\begin{minipage}{1.0\textwidth} 
	\begin{equation}\label{NSBOuOddx4C}
		\left\{\begin{aligned}
			z_t(t,x)&=z^2(t,x).\qquad x\in\mathbb{R}\\
			z(t,x)&=\tfrac{1}{f(x)+ig(x)-t}.
		\end{aligned}\right.
	\end{equation}
\end{minipage}\\

Substituting the initial data into \eqref{NSBOuOddx4C} yields the following result.

\begin{theorem}[Constantin--Lax--Majda explicit formula]\label{thm:CLM}
	
	Suppose $u_0(x)=u(0,x)$ is a smooth function that decays sufficiently rapidly as
	$|x|\to\infty$, and let $v_0(x)=v(0,x)=H(u_0)(x)$. Then the solution to the model vorticity system~\eqref{eq:CLM2} is explicitly given by
	
	\begin{minipage}{1.0\textwidth} 
		\begin{equation}\label{eq:CLM-2-u-v}
			\left\{\begin{aligned}
				&u(t,x)=\frac{u_0(x)}{\left[1-t v_0(x)\right]^2+t^2u_0^2(x)},\\
				&v(t,x)=\frac{v_0(x)
					\left[1-t v_0(x)\right]-tu_0^2(x)}{\left[1-t v_0(x)\right]^2+t^2u_0^2(x)}.\\
			\end{aligned}\right.
		\end{equation}
	\end{minipage}
\end{theorem}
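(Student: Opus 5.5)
Since \eqref{eq:CLM2} contains no spatial derivatives, for each fixed $x\in\R$ it is an autonomous ODE system in $t$ with initial data $(v_0(x),u_0(x))$. The plan is to verify directly that \eqref{eq:CLM-2-u-v} solves this ODE, and then appeal to uniqueness of smooth solutions of a locally Lipschitz ODE. We work pointwise in $x$ throughout; the Hilbert-transform relation $v_0=H(u_0)$ matters only for identifying the data and for the blow-up discussion afterwards.

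First, set $z=v+iu$. A short computation gives $z^2=(v^2-u^2)+i\,2uv$, so \eqref{eq:CLM2} is equivalent to the scalar Riccati equation $z_t=z^2$ with $z(0,x)=z_0(x):=v_0(x)+iu_0(x)$. Separation of variables, or simply differentiating the candidate, shows that $z(t,x)=z_0/(1-tz_0)$ solves it: indeed $\partial_t\bigl(z_0/(1-tz_0)\bigr)=z_0^2/(1-tz_0)^2$. This is the formula in \eqref{NSBOuOddx4C} with $f+ig=1/z_0$, whenever $z_0\neq0$. If $z_0=0$, the solution is $z\equiv0$, and this agrees with the formula.

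Next, rationalize. Multiply numerator and denominator by $\overline{1-tz_0}=(1-tv_0)+itu_0$. The denominator becomes $|1-tz_0|^2=(1-tv_0)^2+t^2u_0^2$. The numerator is
\[
(v_0+iu_0)\bigl((1-tv_0)+itu_0\bigr)=\bigl[v_0(1-tv_0)-tu_0^2\bigr]+i\bigl[u_0(1-tv_0)+tu_0v_0\bigr].
\]
The imaginary part simplifies to $u_0$. Taking real and imaginary parts therefore gives exactly the expressions for $v$ and $u$ in \eqref{eq:CLM-2-u-v}.

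The only delicate point is the domain of validity. The formula holds for every $t$ at which $(1-tv_0(x))^2+t^2u_0(x)^2>0$. This quantity vanishes only when $u_0(x)=0$ and $tv_0(x)=1$. So the solution is smooth in $(t,x)$ up to the first such time, and the theorem should be read as asserting \eqref{eq:CLM-2-u-v} on that maximal interval. Uniqueness on this interval follows from the Picard--Lindel\"of theorem applied pointwise in $x$, since the right-hand side is a polynomial in $(u,v)$. Joint smoothness in $x$ then follows from the formula itself, together with the smoothness and decay of $u_0$ and $v_0=H(u_0)$. I expect this bookkeeping of the time interval to be the main point needing care; the algebra itself is routine.
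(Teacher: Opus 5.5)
Your proof is correct and takes the same route as the paper: you set $z=v+iu$ to reduce \eqref{eq:CLM2} to the Riccati equation $z_t=z^2$, use the explicit solution $z=1/(f+ig-t)$ with $f+ig=1/z_0$ (equivalently $z=z_0/(1-tz_0)$), and separate real and imaginary parts after substituting the data. Your extra bookkeeping — the case $z_0=0$, pointwise uniqueness, and validity up to the first time at which $u_0(x)=0$ and $tv_0(x)=1$ — is more careful than the paper, which only substitutes the initial data into the Riccati formula.
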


\begin{theorem}[Constantin--Lax--Majda breakdown criterion]\label{thm:CLM-2}
	The smooth solution to the CLM system ~\eqref{eq:CLM2} blows up in finite time if and only if the set
	\[
	Z:=\{x\in\mathbb{R}: u_0(x)=0 \text{ and } v_0(x)>0\}
	\]
	is nonempty. If $M:=\max_{x\in Z} v_0(x)$ and $\bar x\in Z$ satisfies $v_0(\bar x)=M$, then the earliest blow-up time is
	\[
	T=\frac{1}{M},
	\]
	and $v(t,\bar x)\to +\infty$ as $t\uparrow T$. Moreover, at such a blow-up point one has $u(t,\bar x)\equiv 0$ for all $t$, so the singularity is carried by the $v$-component.
\end{theorem}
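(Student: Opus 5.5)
The statement to prove is Theorem~\ref{thm:CLM-2}. The plan is to read everything off the explicit formula of Theorem~\ref{thm:CLM}. For each $x$, the system \eqref{eq:CLM2} is an ODE in $t$ alone. The smooth solution exists exactly as long as the common denominator
\[
D(t,x):=\bigl[1-tv_0(x)\bigr]^2+t^2u_0^2(x)
\]
stays positive.

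\textbf{Step 1: where the denominator can vanish.} $D(t,x)=0$ for some $t>0$ if and only if both $u_0(x)=0$ and $t=1/v_0(x)$ with $v_0(x)>0$. Hence $D>0$ for all $t\ge 0$ at every $x\notin Z$, and for $x\in Z$ the first zero is at $t=1/v_0(x)$.

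\textbf{Step 2: if $Z=\emptyset$, the solution is global.} Here $D(t,x)>0$ for every $x$ and $t$. To get a smooth global solution I need a positive lower bound for $D$ that is uniform in $x$ on each compact time interval. On compact sets of $x$ this follows from continuity. For large $|x|$ I would use the decay of $u_0$ and of $v_0=H(u_0)$, both tending to zero as $|x|\to\infty$: this gives $D\ge 1/2$ for $|x|$ large and $t\le T$. Smoothness of $u,v$ then follows from the formula. This uniformity at infinity is the main obstacle, and it is exactly where the decay hypothesis is used.

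\textbf{Step 3: if $Z\neq\emptyset$, blow-up occurs, and at time $1/M$.} First I check that the maximum $M$ is attained. Since $v_0\to 0$ at infinity and $v_0>0$ on $Z$, the set $Z\cap\{v_0\ge \epsilon\}$ is compact for small $\epsilon$, so a maximizer $\bar x$ exists.

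At $\bar x$ we have $u_0(\bar x)=0$, so $u(t,\bar x)\equiv 0$. The $v$-equation reduces to the Riccati equation $v_t=v^2$, with solution
\[
v(t,\bar x)=\frac{M}{1-tM},
\]
which tends to $+\infty$ as $t\uparrow 1/M$.

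To see that no blow-up happens earlier, I show that $D$ is bounded below uniformly in $x$ on $[0,T']$ for every $T'<1/M$. Consider the minimum of $D(t,\cdot)$ over $t\in[0,T']$.
\begin{itemize}
\item Near points of $Z$, we have $tv_0\le T'M<1$, so the first square in $D$ stays bounded below.
\item Away from $Z$, on compact sets in $x$, either $u_0\neq0$, or $v_0\le0$, or $v_0$ is below $M$; in each case $D>0$ on $(0,T']$, and $D=1$ at $t=0$.
\item At infinity, the decay argument from Step 2 applies.
\end{itemize}
A compactness argument combining these three regions gives $\inf D>0$ on $[0,T']$.

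Finally, $T=1/M$ is exactly the time at which $\sup_x|v|$ becomes infinite. This establishes the "if and only if" statement, the formula for the earliest blow-up time, and the identity $u(t,\bar x)\equiv0$ at the blow-up point.
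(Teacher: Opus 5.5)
Your proposal is correct and takes the same route the paper relies on: the paper gives no separate proof and attributes the criterion to Constantin--Lax--Majda, where it is read off from the explicit formula of Theorem~\ref{thm:CLM} through the zero set of the denominator $[1-tv_0]^2+t^2u_0^2$, exactly as you do (including your check that the maximum over $Z$ is attained, which uses $v_0\to0$ at infinity). Your uniform lower bound on $[0,T']$ is cleanest stated directly: the denominator is continuous and strictly positive on $[0,T']\times\mathbb{R}$ and tends to $1$ uniformly as $|x|\to\infty$. This avoids your slightly loose ``near points of $Z$'' case, where $v_0$ could exceed $M$ at nearby points with $u_0\neq0$.
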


\subsection{\texorpdfstring{System (E$2$)}{System (E2)}}
	We now introduce a stream function $\bar \psi$ and augment ~\eqref{eq:E2} with two additional relations, obtaining system ~\eqref{E2}. To reserve the symbols $(u,v,g,p)$ for later perturbation variables, we place bars on the background unknowns. Thus system (E$2$) in ~\eqref{E2} consists of five dependent variables $(\bar u,\bar v,\bar g,\bar p,\bar \psi)$, viewed as even functions of $(r,z)$ on the meridian plane, together with seven equations; the last one defines $\tfrac{\D}{\D t}$.
	
\begin{equation}\label{E2}
\left\{\begin{aligned}
0&=\tfrac{\D}{\D t}\,\bar u
- 2\,\bar u\,\bar v, \qquad(t,r,z)\in [0,T)\times\R^2\\
0&=\tfrac{\D}{\D t}\,\bar v
 -\bar v^{2}+\bar u^{2}-\tfrac{1}{r}\bar p_{r}, \\
0&=\tfrac{\D}{\D t}\,\bar g
 +\bar g^{2}+\tfrac{1}{z}\,\,\bar p_{z},\\
0&=z\partial_z\bar g-r\partial_r\bar v+\bar g-2\bar v, \\
0&=\bar v-\bar\psi-z\partial_z\bar\psi, \\
0&=\bar g-2\bar\psi-r\partial_r\bar\psi, \\
\tfrac{\D}{\D t}:&=\partial_t+\bar
v\,r\partial_r+\bar g\,z\partial_z.
\end{aligned}\right.
\end{equation}

\begin{remark}[No Redundancy]	
	Substituting \eqref{E2}(5) and \eqref{E2}(6) into \eqref{E2}(4) yields an identity, so no redundancy is introduced.
\end{remark}

We will study the system in polar coordinates.
\subsection{Polar coordinates \texorpdfstring{$(x=\sqrt{r^2+z^2},\ \theta=\arctan(z/r))$}{(x=r, theta=arctan(z/r))}}
We use polar coordinates on the meridian plane:
\begin{equation}\label{eq:polarRxi}
	r=x \cos\left(\theta\right), \quad z=x \sin\left(\theta\right).
\end{equation}	
\begin{remark}
	The polar coordinates $(x,\theta)$ on the meridian plane $(r,z)$ are also the spherical coordinates $(x,\theta,\phi)$ (with north pole at $\theta=\pi/2$) for 3D axisymmetric functions in $\R^3$.
	In the present manuscript we keep
	\[
	\theta\in\Bigl[-\frac{\pi}{2},\frac{\pi}{2}\Bigr],\qquad x\in\R,
	\]
	so that the signed variable $x$ already accounts for the opposite horizontal direction. With this convention the four distinguished axis directions
	\[
	\theta=0,\quad \theta=\pm\frac{\pi}{2},\quad \theta=\pi
	\]
	are represented without redundancy by $\theta=0,\pm\frac{\pi}{2}$ together with the sign of $x$. This is why the later perturbation and elliptic problems are posed on the strip $\theta\in[-\tfrac{\pi}{2},\tfrac{\pi}{2}]$ rather than on $[0,\pi]$.
\end{remark}

\subsection{Background}

We write the background solutions as
\begin{equation}\label{eq:background-0}
	\begin{aligned}
		\bar u&=U(t,x,\theta),\quad
		\bar v=V(t,x,\theta),\quad\bar g=-G(t,x,\theta),\quad
		\bar p=P(t,x,\theta).
	\end{aligned}
\end{equation}

After substituting ~\eqref{eq:full_def} into the first four equations in ~\eqref{E2},we obtain four equations with the following structure:

\begin{equation}\label{eq:ZERO-1234}
\left\{
\begin{aligned}
U_t+2V\,U&=x U_{x}W(\theta)\qquad\quad\,\,\,+J_1\cdot K_1(U_\theta),\\[2mm]
V_{t}+V^2-U^2&=x V_{x} W(\theta)-\tfrac{1}{x}P_{x}\,\,\,+J_2\cdot K_2(V_\theta,P_\theta),\\[2mm]
G_{t}-G^2&=x G_{x} W(\theta)+\tfrac{1}{x}P_{x}\,\,+J_3\cdot K_3(G_\theta,P_\theta),\\[2mm]
G-2V&=-x \partial_xW(\theta),\\
W(\theta):&=G\sin ^2(\theta ) -V\cos ^2(\theta ).
		\end{aligned}
\right.
\end{equation}

where
\begin{equation}\label{eq:ZERO-1234-2}
	\left\{
	\begin{aligned}
		J_1&=\bigl\{\tfrac12\sin(2\theta ) (G+V)\bigr\},\\
		K_1&=\{U_\theta\},\\[2mm]
		J_2&=\bigl\{-\tfrac12\sin(2\theta)(G+V),-\tfrac{\tan(\theta)}{x^2}\bigr\}\\
		K_2&=\{V_\theta,P_\theta\},\\[2mm]
		J_3&=\bigl\{-\tfrac12\sin (2\theta )(G+V),-\tfrac{\cot(\theta)}{x^2}\bigr\},\\
		K_3&=\{G_\theta,P_\theta\}.
	\end{aligned}
	\right.
\end{equation}

We now examine how the equations simplify under the following ridge-flat Ansatz (in the directions normal to the ridge):
\begin{equation}\label{eq:flat}
\left\{\begin{aligned}
	&(P_\theta,V_\theta,U_\theta,G_\theta)|_{\theta_0}=0,\qquad\theta_0=0,\pi,\quad x\geq 0,\quad t\in [0,T),\\[2mm]
	&(P_\theta,V_\theta,U_\theta,G_\theta)|_{\theta_1}=0,\qquad\theta_1=\pm\tfrac{\pi}{2},\quad x\geq 0,\quad t\in [0,T).
\end{aligned}\right.
\end{equation}

\begin{remark}
	We notice that the ansatz \eqref{eq:flat} is equivalent to that $(U,V,G,P)(t,x,\theta)$ are even functions of $(r,z)=(x\cos\theta,x\sin\theta)$. As we already noted in \ref{rem:symetric-in-rz},if the initial conditions have these symmetry, the dynamic equations will preserve this symmetry property. Thus the ridge flatness ansatz is automatically preserved by the dynamic equations.
\end{remark}
	
\begin{remark}
Denote $\phi(\theta)$ the end-vanishing smooth interval function 
\begin{equation}\label{eq:end-vanishing-phi}
	\phi(\theta):=\exp\bigl(-\sin(2\theta)^{-2}\bigr).
\end{equation} 
Then the following initial conditions can simoutaneously fullfil the requirement of ridge flatness at $\theta\in\{-\tfrac{\pi}{2},0,\tfrac{\pi}{2},\pi\}$.
	\begin{equation}\label{initial-condition-UVG}
	\left\{\begin{aligned}
		U(0,x,\theta)&=Bx^2\exp\bigl(-B_1 x^2(1+B_2\phi(\theta))\bigr),\quad B,B_1,B_2>0\\
		V(0,x,\theta)&=A\exp\bigl(-A_1 x^2(1+A_2\phi(\theta))\bigr),\quad A,A_1,A_2>0\\
		G(0,x,\theta)&=C\exp\bigl(-C_1 x^2(1+C_2\phi(\theta))\bigr),\quad C,C_1,C_2>0.
	\end{aligned}\right.
	\end{equation}
\end{remark}

\subsection[ ]{Symmetry axes and axis-restricted functions}

\begin{theorem}{System ~\eqref{E2} restricted to the symmetry axes $\theta=\theta_0,\theta_1$} \label{thm:symmetry-axis-ridge-functions}\\
\begin{equation}
\theta=\theta_0=0,\pi\qquad\theta=\theta_1=\pm\tfrac{\pi}{2}.
\end{equation}

	(A) The dynamics of the axis-restricted functions $\{U,V,G,P\}(t,x,\theta_0)$ are determined by the following $1+1$-dimensional convective reduction (R0) for $u(t,x):=U(t,x,\theta_0)$ and $v(t,x):=V(t,x,\theta_0)$. Away from $x=0$ the convective terms remain present, but at the apex $x=0$ the system closes exactly.
	
\begin{equation}\label{eq:ray-1D-system-general}
	\left\{
\begin{aligned}
	u_t&=-xvu_x-2vu,\\[2mm]
	v_t+\tfrac13xv_{tx}&=\tfrac13x^2\bigl((v_x)^2-vv_{xx}\bigr)+v^2+\tfrac13u^2,
\end{aligned}\right.
\end{equation}
and two equations for determining $p(t,x):=P(t,x,\theta_0)$ and $g(t,x):=G(t,x,\theta_0)$.
\begin{equation}\label{eq:ridge-dynamics-2}
	\left\{\begin{aligned}
		g&=2v+xv_x,\\
		p_{x}&=x\bigl(-v_{t}+u^2-v^2-x vv_{x}\bigr).
	\end{aligned}\right.
\end{equation}	

(B) At the apex $x=0$, \eqref{eq:ray-1D-system-general} reduces to the closed pointwise ODE
\begin{equation}\label{eq:ray-1D-system-general-x-0}
	\left\{
	\begin{aligned}
		u_t&=-2vu,\\
		v_t&=v^2+\tfrac13u^2,
	\end{aligned}\right.
\end{equation}
This exact closure at $x=0$ is the rev3 mechanism for finite-time apex blow-up.  We retain the $x\to\infty$ observation only as a qualitative asymptotic remark, not as the main singularity mechanism.\\

	(C) The dynamics of the axis-restricted functions $\{U,V,G,P\}(t,x,\theta_1)$ are likewise described by the following $1+1$-dimensional convective reduction (Z0) for $\bar u(t,x):=U(t,x,\theta_1)$ and $\bar g(t,x):=G(t,x,\theta_1)$. Again, convection persists away from the origin, while the apex dynamics closes exactly at $x=0$.

\begin{equation}\label{eq:ray-1D-system-general-3}
	\left\{
	\begin{aligned}
		\bar u_t&=x\bigl(\bar u_x\bar g-\bar g_x\bar u\bigr)-\bar g\bar u,\\[2mm]
		\bar g_t+\tfrac13x\bar g_{tx}&=-\tfrac13x^2\bigl((\bar g_x)^2-\bar g\bar g_{xx}\bigr)+\tfrac12\bar g^2+\tfrac23\bar u^2,
	\end{aligned}\right.
\end{equation}
and two equations for determining $\bar p(t,x):=P(t,x,\theta_1)$ and $\bar v(t,x):=V(t,x,\theta_1)$.
\begin{equation}\label{eq:ridge-dynamics-4}
	\left\{\begin{aligned}
		\bar v&=\tfrac12(\bar g+x\bar g_x),\\
		\bar p_{x}&=x\bigl(\bar g_{t}-\bar g^2+x \bar g\bar g_{x}\bigr).
	\end{aligned}\right.
\end{equation}	

(D) At the apex $x=0$, \eqref{eq:ray-1D-system-general-3} reduces to the closed pointwise ODE
\begin{equation}\label{eq:ray-1D-system-general-3-x-0}
	\left\{
	\begin{aligned}
		\bar u_t&=-\bar g\bar u,\\
		\bar g_t&=\tfrac12\bar g^2+\tfrac23\bar u^2.
	\end{aligned}\right.
\end{equation}
We remark that \eqref{eq:ray-1D-system-general-3-x-0} is identical to \eqref{eq:ray-1D-system-general-x-0} if $\bar g$ is replaced by $2v$.

\begin{remark}\label{rem:convective-terms}
The convective terms do not vanish identically on the symmetry axes away from the origin. However, each such term carries at least one factor of $x$ or $x^2$. Consequently, all convective contributions vanish at the apex $x=0$, and the apex dynamics remains closed there.
\end{remark}
\begin{remark}\label{rem:1D-model-diff}
Although the system $(e4)$ resembles the earlier $(1+1)$D model studied separately, there is an important difference here: $(e4)$ is derived exactly from the symmetry-axis restriction of the Euler-reduced $(1+2)$D system. Its role in the present manuscript is therefore intrinsic, not auxiliary.
\end{remark}

\end{theorem}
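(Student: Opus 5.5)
The plan is to restrict each equation of \eqref{E2}, or equivalently of \eqref{eq:ZERO-1234}, to the axis in question. The evenness of $(U,V,G,P)$ in $(r,z)$, which by the remarks following \eqref{eq:flat} is preserved by the dynamics, removes all angular derivatives there. The one genuinely singular term, the pressure term transverse to the axis, is handled with the Poisson equation \eqref{eq:PPE-2}. I treat part (A) in detail; part (C) is the mirror computation with the roles of $r$ and $z$ exchanged. Parts (B) and (D) then follow by setting $x=0$.

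\textbf{Step 1 (the axis $\theta=0$, so $z=0$ and $r=x$).} Here $\sin 2\theta=\tan\theta=0$ and $W=-V$. The ridge flatness \eqref{eq:flat} kills every term $J_i\cdot K_i$ except $\tfrac{\cot\theta}{x^2}P_\theta$. I set that term aside, since it is exactly $\tfrac1z p_z$ in \eqref{E2}.
\begin{itemize}
\item The $U$-equation becomes $u_t=-xvu_x-2vu$ directly.
\item The constraint $G-2V=-x\partial_xW$ gives $g=2v+xv_x$.
\item The $V$-equation, rewritten with $\tfrac1r p_r=\tfrac1x p_x$, gives the expression for $p_x$ in \eqref{eq:ridge-dynamics-2}.
\end{itemize}
These account for the first line of \eqref{eq:ray-1D-system-general} and both lines of \eqref{eq:ridge-dynamics-2}.

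\textbf{Step 2 (closing the $G$-equation on $\theta=0$).} Since $p$ is smooth and even in $z$, I write $p_z=z\,p_{zz}(r,0)+O(z^3)$. Hence $\tfrac1z p_z\to p_{zz}$ on the axis; this is the l'H\^opital limit of $\cot\theta\,P_\theta/x^2$. I eliminate $p_{zz}$ through \eqref{eq:PPE-2}:
\[
p_{zz}=-\mathrm{RHS}-p_{rr}-\tfrac1r p_r .
\]
At $z=0$, evenness makes every term containing $z$ or $\partial_z$ vanish. The right side therefore collapses to an expression in $u,v,g$ and their $r$-derivatives, namely $2(u^2+v^2+g^2)+r(u^2)_r+2r(v^2)_r+(ru_r)^2+2(rv_r)^2$.

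Next I substitute $p_r/r$ and $p_{rr}=\partial_x\bigl[x(-v_t+u^2-v^2-xvv_x)\bigr]$ from Step 1, and $g=2v+xv_x$. This puts a $v_t$ term and an $xv_{tx}$ term on each side of the $G$-equation (keeping track of the sign convention $\bar g=-G$). I then collect the $v_t$ terms and the $xv_{tx}$ terms. I expect them to combine as $3v_t+xv_{tx}$, and after dividing by $3$ the remaining quadratic terms should reduce to $\tfrac13x^2\bigl((v_x)^2-vv_{xx}\bigr)+v^2+\tfrac13u^2$, which is the second line of \eqref{eq:ray-1D-system-general}.

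\textbf{Step 3 (the axis $\theta=\pm\tfrac{\pi}{2}$, so $r=0$ and $z=\pm x$).} Here $W=G$, and the constraint gives $\bar v=\tfrac12(\bar g+x\bar g_x)$. The $U$-equation yields the first line of \eqref{eq:ray-1D-system-general-3}, and the $G$-equation yields $\bar p_x$ in \eqref{eq:ridge-dynamics-4}. The singular term is now $\tfrac1r p_r\to p_{rr}$ in the $V$-equation.

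On the axis the operator $p_{rr}+\tfrac1r p_r$ equals $2p_{rr}$, so \eqref{eq:PPE-2} gives
\[
p_{rr}=-\tfrac12\bigl(\mathrm{RHS}+p_{zz}\bigr),
\]
with $p_{zz}$ obtained from \eqref{eq:ridge-dynamics-4}. This factor $\tfrac12$ is what produces the different coefficients $\tfrac12\bar g^2+\tfrac23\bar u^2$ in \eqref{eq:ray-1D-system-general-3}. The elimination itself is the same as in Step 2.

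\textbf{Step 4 (parts (B) and (D)).} In \eqref{eq:ray-1D-system-general} and \eqref{eq:ray-1D-system-general-3}, every convective term carries a factor $x$ or $x^2$, as noted in Remark~\ref{rem:convective-terms}. These factors multiply derivatives such as $v_{tx}$ and $v_{xx}$, which are bounded for smooth solutions. Evaluating at $x=0$ therefore leaves exactly \eqref{eq:ray-1D-system-general-x-0} and \eqref{eq:ray-1D-system-general-3-x-0}. Substituting $\bar g=2v$ into the second system gives $\bar u_t=-2v\bar u$ and $2v_t=2v^2+\tfrac23\bar u^2$, i.e.\ $v_t=v^2+\tfrac13\bar u^2$, which is the first system and confirms the final remark of part (D).

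\textbf{Main obstacle.} The main obstacle is Steps 2 and 3. They require justifying the axis limits of the singular coefficients $\tfrac1z p_z$ and $\tfrac1r p_r$ from smoothness plus evenness, and then tracking the Poisson right-hand side carefully enough that the $v_t$ and $v_{tx}$ terms and the quadratic terms combine with exactly the stated coefficients. I would carry out that bookkeeping first in Cartesian $(r,z)$ on the axis and only afterwards translate it to $x$.
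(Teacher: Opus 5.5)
Your Steps 2 and 3 cannot produce the stated equations, and the obstruction is structural rather than a matter of bookkeeping. On $\theta=0$ your Step 1 gives $Q:=\tfrac1x p_x=-v_t-xvv_x+u^2-v^2$. Hence
\[
\bigl(P_{rr}+\tfrac1r P_r\bigr)\big|_{z=0}=\tfrac1x\partial_x\bigl(x^2Q\bigr)=-\bigl(2v_t+xv_{tx}\bigr)+(\text{terms without }\partial_t),
\]
and $2v_t+xv_{tx}=\partial_t(2v+xv_x)=g_t$. Now insert $\partial_z^2P=-\mathrm{RHS}-\tfrac1x\partial_x(x^2Q)$ into the $G$-equation $g_t+xvg_x=g^2+\partial_z^2P$. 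The pressure contributes exactly $+g_t$ on the right, which cancels the $g_t$ on the left. What survives is a relation among $u,v$ and their $x$-derivatives with no time derivative at all.

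This is forced. A pressure Poisson equation is the divergence of the momentum equations combined with incompressibility, so substituting it back returns only $\partial_t$ of the constraint $g=2v+xv_x$. With a Poisson equation that really is that divergence, the leftover is $0=0$. With \eqref{eq:PPE-2} as printed, it is a spurious constraint, which at $x=0$ reads $4u^2+4v^2=0$.

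The same cancellation occurs on $\theta=\pm\tfrac{\pi}{2}$. There $-P_{rr}=\tfrac12(\mathrm{RHS}+\partial_z^2P)$ contains $\tfrac12(\bar g_t+x\bar g_{tx})=\bar v_t$. So no combination $3v_t+xv_{tx}$ can arise this way, and the factor $\tfrac12$ is not the source of $\tfrac12\bar g^2+\tfrac23\bar u^2$.

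There is also a smaller slip. The term $\tfrac{\cot\theta}{x^2}P_\theta$ tends to $x^{-2}P_{\theta\theta}=\partial_z^2P-\tfrac1x P_x$, not to $\partial_z^2P$. Replacing it by $\partial_z^2P$ while keeping the $\tfrac1xP_x$ already in the $G$-equation double-counts. Even then the $xv_{tx}$ terms cancel, so the result is still not of the stated form.

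The paper never uses the Poisson equation here. It takes the ridge-flat ansatz \eqref{eq:flat} as removing the transverse term, which yields \eqref{eq:ridge-dynamics}. It then simply adds the $V$- and $G$-equations, so that the $\mp\tfrac1xP_x$ cancel. With $G=2V+xV_x$ one has $V_t+G_t=3V_t+xV_{tx}$ and $G_x=3V_x+xV_{xx}$, and the right side collapses to $U^2+3V^2+x^2(V_x^2-VV_{xx})$. The formula for $p_x$ is read off from the $V$-equation. On $\theta=\pm\tfrac{\pi}{2}$ one uses $V=\tfrac12(G+xG_x)$, so $V_t+G_t=\tfrac32G_t+\tfrac12xG_{tx}$; this is where the $\tfrac23$ normalization comes from.

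Your worry about the transverse term is nevertheless well founded. Keeping it, the summed equation on $\theta=0$ reads
\[
3v_t+xv_{tx}=u^2+3v^2+x^2\bigl(v_x^2-vv_{xx}\bigr)+x^{-2}P_{\theta\theta}(t,x,0),
\]
and the extra term tends to $(\partial_z^2P-\partial_r^2P)(t,0,0)$ at the apex. So the second equation of (R0) is exact only if $P_{\theta\theta}(t,x,0)=0$, and (B) only if $\partial_z^2P=\partial_r^2P$ at the origin. Evenness does not provide either condition. By the cancellation above, neither does the Poisson equation restricted to the axis, since $P_{\theta\theta}$ there is fixed by the off-axis flow. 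To reproduce the theorem you must adopt that hypothesis explicitly, as the paper's proof does implicitly, and argue by summation. Your Step 4 is then fine.

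Finally, applying the same summation to \eqref{eq:ridge-dynamics-B} gives $\bar g_t+\tfrac13x\bar g_{tx}=x\bar g\bar g_x-\tfrac16x^2\bar g_x^2+\tfrac13x^2\bar g\bar g_{xx}+\tfrac12\bar g^2+\tfrac23\bar u^2$ and $\bar p_x=x(\bar g_t-\bar g^2-x\bar g\bar g_x)$. So the off-apex coefficients displayed in (Z0) do not follow as printed, although (D) is unaffected.
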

\begin{proof}[\textbf{Proof} of \cref{thm:symmetry-axis-ridge-functions}]
	
	Applying the ridge flat ansatz \eqref{eq:flat} and setting $\theta=\theta_0=0 \text{ or }\pi$ in \eqref{eq:ZERO-1234} leads to the \textbf{horizontal symmetry-axis reduction}
	
	\begin{equation}\label{eq:ridge-dynamics}
	\left\{
			\begin{aligned}
				U_t&=-VxU_x-2V\,U,\\[2mm]
				V_t&=-VxV_x+U^2-V^2-\tfrac1x P_x,\\[2mm]
				G_t&=-VxG_x+G^2+\tfrac{1}{x}P_x,\\[2mm]
				G&=2V+xV_x.
			\end{aligned}\right.
	\end{equation}
Separating $(U_t,V_t)$ from $(G,P_x)$ yields ~\eqref{eq:ray-1D-system-general} and \eqref{eq:ridge-dynamics-2}.
This proves Claim (A).\\

If we define $y:=1/x,\tilde f(y):=f(x),f\in\{u,v\}$, then
\begin{equation}
	\left\{\begin{aligned}
		&x\partial_x f(x)=-y\partial_{y}\tilde f(y),\\  
		&x^2(\partial_x)^2f(x)=2y\partial_{y}\tilde f(y)+y^2(\partial_{y})^2\tilde f(y).
	\end{aligned}\right.
\end{equation}
So the terms $\bigl(xvu_x,xv_{tx},x^2(v_x)^2-x^2vv_{xx}\bigr)$ in \eqref{eq:ray-1D-system-general} vanish at $x=0$, and they also vanish formally as $x\to\infty$ after the inversion $y=1/x$. This proves Claim (B).  For rev3, the essential point is the exact closure at the apex $x=0$; the $x\to\infty$ limit is only a secondary consistency check.  \\

	Applying the ridge flat ansatz \eqref{eq:flat} and setting $\theta=\theta_1=\pm\tfrac{\pi}{2}$ in \eqref{eq:ZERO-1234} leads to the \textbf{vertical symmetry-axis reduction}

\begin{equation}\label{eq:ridge-dynamics-B}
	\left\{
	\begin{aligned}
		U_t&=GxU_x-2V\,U,\\[2mm]
		V_t&=GxV_x+U^2-V^2-\tfrac1x P_x,\\[2mm]
		G_t&=GxG_x+G^2+\tfrac{1}{x}P_x,\\[2mm]
		V&=\tfrac12(G+xG_x).
	\end{aligned}\right.
\end{equation}
Separating $(U_t,G_t)$ from $(V,P_x)$ yields ~\eqref{eq:ray-1D-system-general-3} and \eqref{eq:ridge-dynamics-4}.
This proves Claim (C). The claim (D) can be similarly proved. This completes the \textbf{Proof} of \cref{thm:symmetry-axis-ridge-functions}.
	\end{proof}
	
\section[short title]{Finite-time blowup of the closed apex ODE system $(e3)$}\label{sec:horizontal-dynamics}
In this section we study the finite-time blow-up mechanism for the closed apex dynamics obtained from \eqref{eq:ray-1D-system-general-x-0}. This $1+1$D system is not presented as an independent model; rather, it is the exact pointwise ODE satisfied at the apex $x=0$ on each of the symmetry axes in the rev3 geometry. For convenience we write it in the positive variables $U=u^2$ and $W=v$:
\begin{equation}\label{eq:E3-system}
	U_t=-4WU,
\qquad
W_t=W^2+\frac13 U,
\qquad U(0)=b>0,\quad W(0)=c>0.
\end{equation}

For system \eqref{eq:E3-system}, we give a self-contained phase-portrait analysis in the positive variables \((U,W)\): positivity of \(U\), strict monotonicity of \(W\), an explicit first integral, finite-time blow-up \(W(t)\to+\infty\), and the terminal asymptotics
\[
W(t)\sim (T-t)^{-1},
\qquad
U(t)\sim b\Bigl(c^2+\frac b9\Bigr)^2(T-t)^4.
\]
This section supplies the exact ODE mechanism used later whenever the apex trace of a convective axis-reduction closes.

\subsection{Scalar reduction and positivity of $U$}

Since the second equation is algebraic in $U$, we can solve for $U$ directly:
\begin{equation}\label{eq:U-from-W}
	U=3(W_t-W^2).
\end{equation}
Substituting \eqref{eq:U-from-W} into the first equation yields
\[
3(W_{tt}-2WW_t)=-12W(W_t-W^2),
\]
hence
\begin{equation}\label{eq:W-second-order}
	W_{tt}+2WW_t-4W^3=0.
\end{equation}

\begin{lemma}[Positivity of $U$]\label{lem:U-positive}
	Let $(U(t),W(t))$ be any solution of \eqref{eq:E3-system} on its maximal interval of existence
	$[0,T)$. Then
	\begin{equation}\label{eq:U-explicit}
		U(t)=b\exp\!\Bigl(-4\int_0^t W(s)\,ds\Bigr)>0
		\qquad\text{for every }t\in[0,T).
	\end{equation}
	In particular, $U(t)>0$ for all times of existence.
\end{lemma}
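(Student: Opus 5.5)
The identity \eqref{eq:U-explicit} follows by treating the first equation of \eqref{eq:E3-system} as a scalar linear ODE in $U$ whose coefficient is the continuous function $-4W(t)$. On the maximal interval $[0,T)$ the solution is $C^1$ (indeed smooth), because the right-hand side of \eqref{eq:E3-system} is polynomial. In particular, $W$ is continuous on $[0,T)$, so the integral $\int_0^t W(s)\,ds$ is finite and $C^1$ in $t$ for every $t<T$. This is the only regularity input needed.

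Introduce the integrating factor
\[
E(t):=\exp\Bigl(4\int_0^t W(s)\,ds\Bigr)>0 .
\]
Differentiating and using $U_t=-4WU$ gives
\[
\frac{d}{dt}\bigl(E(t)U(t)\bigr)=E(t)\bigl(U_t+4WU\bigr)=0 .
\]
Hence $E(t)U(t)$ is constant on $[0,T)$, and it equals $E(0)U(0)=b$. Therefore $U(t)=b\,E(t)^{-1}$, which is exactly \eqref{eq:U-explicit}. Strict positivity follows because $b>0$ and the exponential of a finite real number is positive.

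The uniqueness theory for ODEs is not needed, since the integrating-factor computation applies to any $C^1$ solution. No case split on the sign of $W$ is required either. The only point to watch is that the claim is pointwise for $t<T$. Near $T$ the integral $\int_0^t W$ may diverge to $+\infty$, which is consistent with the later asymptotics $U\to0$, but this does not affect positivity at any fixed time of existence. I do not expect a genuine obstacle here. The argument is the standard one-line Grönwall-type identity.
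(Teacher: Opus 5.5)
Your proof is correct and is essentially the paper's argument, since both solve the linear equation $U_t=-4WU$ explicitly. The one difference is that the paper divides by $U$ and needs a first-vanishing-time contradiction to rule out loss of positivity, whereas your identity $(EU)'=0$ requires no division by $U$ and gives the formula on all of $[0,T)$ directly, so that step is unnecessary.
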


\begin{proof}
	On the interval of existence, the first equation in \eqref{eq:E3-system} can be written as
	\[
	\frac{U_t}{U}=-4W
	\]
	whenever $U\neq 0$. Integrating from $0$ to $t$ gives
	\[
	\log U(t)-\log b=-4\int_0^t W(s)\,ds,
	\]
	which yields \eqref{eq:U-explicit} as long as $U$ stays positive.
	
	To rule out loss of positivity, suppose for contradiction that there exists a first time
	$t_*\in(0,T)$ with $U(t_*)=0$. Since $U(0)=b>0$ and $U$ is continuous, one has
	$U(t)>0$ for all $t\in[0,t_*)$, so the integrated formula above is valid on $[0,t_*)$.
	Passing to the limit $t\uparrow t_*$ then gives $U(t_*)>0$, a contradiction.
\end{proof}

\begin{corollary}[Strict monotonicity of $W$]\label{cor:W-strict-inc}
	For every solution of \eqref{eq:E3-system} with $b>0$,
	\begin{equation}\label{eq:W-inc}
		W_t=W^2+\frac13 U>0
		\qquad\text{for all }t\in[0,T).
	\end{equation}
	Hence $W(t)$ is strictly increasing on its whole interval of existence.
\end{corollary}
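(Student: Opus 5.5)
The corollary follows directly from Lemma~\ref{lem:U-positive} combined with the second equation of \eqref{eq:E3-system}. I would first fix a solution $(U,W)$ of \eqref{eq:E3-system} on its maximal interval $[0,T)$ with $b>0$. By Lemma~\ref{lem:U-positive}, the representation \eqref{eq:U-explicit} gives $U(t)=b\exp\bigl(-4\int_0^t W(s)\,ds\bigr)>0$ for every $t\in[0,T)$. Since $W$ is real-valued and continuous (indeed $C^1$) on $[0,T)$, we have $W(t)^2\ge 0$. Substituting these into the second equation of \eqref{eq:E3-system} yields
\[
W_t(t)=W(t)^2+\tfrac13 U(t)\;\ge\;\tfrac13 U(t)\;>\;0,\qquad t\in[0,T),
\]
which is exactly \eqref{eq:W-inc}.

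Strict monotonicity then follows from the mean value theorem. For $0\le t_1<t_2<T$ there is a $\tau\in(t_1,t_2)$ with
\[
W(t_2)-W(t_1)=W_t(\tau)(t_2-t_1)>0.
\]
I would note that the conclusion does not use the sign of $c=W(0)$. Positivity of $W_t$ comes entirely from the strictly positive forcing $\tfrac13U$, and $W^2$ is nonnegative regardless of sign. The hypothesis $c>0$ only becomes relevant later, for blow-up.

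I do not expect any real obstacle. The only delicate point is already handled in Lemma~\ref{lem:U-positive}: one needs $U$ to stay strictly positive, not merely nonnegative, on the whole maximal interval. That strictness is what makes the inequality $W_t>0$ strict even at a time where $W$ might vanish, which can occur if $c\le 0$. The exponential representation secures it, because the integral $\int_0^t W$ is finite for every $t<T$.
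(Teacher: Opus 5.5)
Your proof is correct and follows the paper's argument: Lemma~\ref{lem:U-positive} gives $U>0$, and substituting this into the second equation of \eqref{eq:E3-system} yields $W_t>0$. Your explicit mean-value-theorem step and your observation that only $b>0$ (not $c>0$) is needed are accurate details that the paper leaves implicit.
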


\begin{proof}
	By Lemma~\ref{lem:U-positive}, one has $U(t)>0$ for all $t\in[0,T)$. Substituting this into the
	second equation of \eqref{eq:E3-system} gives \eqref{eq:W-inc}.
\end{proof}

\subsection{An explicit first integral}

\begin{proposition}[Explicit first integral in the $(U,W)$-plane]\label{prop:FI-UW}
	Every solution of \eqref{eq:E3-system} with $b>0$ satisfies
	\begin{equation}\label{eq:FI-UW-1}
		U(U+9W^2)^2=b(b+9c^2)^2.
	\end{equation}
	Equivalently,
	\begin{equation}\label{eq:FI-UW-2}
		\sqrt{U}\Bigl(W^2+\frac{U}{9}\Bigr)=\sqrt{b}\Bigl(c^2+\frac{b}{9}\Bigr).
	\end{equation}
\end{proposition}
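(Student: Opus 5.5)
The plan is to verify directly that the quantity
\[
F(U,W):=U\,(U+9W^2)^2
\]
is constant along every solution of \eqref{eq:E3-system}. The identity \eqref{eq:FI-UW-1} then follows by evaluating $F$ at $t=0$, where $(U,W)=(b,c)$.

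First I will compute $\frac{d}{dt}F$ using the two equations $U_t=-4WU$ and $W_t=W^2+\tfrac13U$. By the product rule,
\[
\frac{d}{dt}F=U_t(U+9W^2)^2+2U(U+9W^2)\bigl(U_t+18WW_t\bigr).
\]
Next I substitute the equations into the second factor:
\[
U_t+18WW_t=-4WU+18W^3+6WU=2W(U+9W^2).
\]
Hence
\[
\frac{d}{dt}F=(U+9W^2)^2\bigl(-4WU+4WU\bigr)=0.
\]
This is the only real computation, and it is short.

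If I were not told the form of $F$ in advance, I would recover it by looking for a weighted homogeneous invariant. The system is quadratic and homogeneous of degree two under the scaling $U\sim W^2$. So I would set $U=sW^2$ and note that $dU/dW=-4WU/(W^2+U/3)$ separates into an ODE for $s$ in terms of $\log W$. Integrating that ODE should produce exactly the factor $(U+9W^2)^2$.

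Constancy of $F$ on the whole maximal interval $[0,T)$ uses only smoothness of the solution there, and no positivity is needed for that step. For the equivalent form \eqref{eq:FI-UW-2}, I will invoke Lemma~\ref{lem:U-positive}: it gives $U>0$, and $U+9W^2>0$ trivially. Taking positive square roots of both sides of \eqref{eq:FI-UW-1} and dividing by $9$ then yields \eqref{eq:FI-UW-2}.

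I expect no genuine obstacle here. The only point needing care is the choice of sign when taking the square root, and that is settled by the positivity lemma.
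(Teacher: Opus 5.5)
Your proof is correct and matches the paper's: you differentiate $U(U+9W^2)^2$ along the flow, use $U_t+18WW_t=2W(U+9W^2)$ to show the derivative vanishes, and evaluate at $t=0$. Your explicit appeal to Lemma~\ref{lem:U-positive} to justify taking positive square roots is a small addition; the paper leaves that step implicit.
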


\begin{proof}
	Define
	\[
	\mathcal{I}(t):=U(t)\bigl(U(t)+9W(t)^2\bigr)^2.
	\]
	Using \eqref{eq:E3-system},
	\[
	\mathcal{I}'=U'(U+9W^2)^2+2U(U+9W^2)(U'+18WW').
	\]
	Since
	\[
	U'=-4WU,\qquad W'=W^2+\frac13U,
	\]
	we obtain
	\[
	U'+18WW'=-4WU+18W\Bigl(W^2+\frac13U\Bigr)=2W(U+9W^2).
	\]
	Hence
	\[
	\mathcal{I}'=(-4WU)(U+9W^2)^2+2U(U+9W^2)\cdot 2W(U+9W^2)=0.
	\]
	Therefore $\mathcal{I}(t)$ is constant, and evaluating at $t=0$ yields \eqref{eq:FI-UW-1}.
	Taking square roots gives \eqref{eq:FI-UW-2}.
\end{proof}

\subsection{Phase curves, finite-time blow-up, and terminal asymptotics}

Let
\begin{equation}\label{eq:Kdef}
	K:=\sqrt{b}\Bigl(c^2+\frac{b}{9}\Bigr)>0.
\end{equation}
Then \eqref{eq:FI-UW-2} becomes
\begin{equation}\label{eq:orbit}
	\sqrt{U}\Bigl(W^2+\frac{U}{9}\Bigr)=K.
\end{equation}
Equivalently,
\begin{equation}\label{eq:W2-of-U}
	W^2=\frac{K}{\sqrt{U}}-\frac{U}{9}.
\end{equation}

\begin{theorem}[Phase curves and their orientation]\label{thm:phase-portrait}
	Let $(U(t),W(t))$ solve \eqref{eq:E3-system} with $b>0$. Then:
	\begin{enumerate}
		\item The trajectory lies on the explicit phase curve \eqref{eq:orbit}.
		\item The allowed range of $U$ is
		\begin{equation}\label{eq:Umax}
			0<U\le U_{\max}:=(9K)^{2/3}.
		\end{equation}
		\item The trajectory remains in the positive half-plane $W>0$ and moves monotonically upward.
		\item Since $U_t=-4WU<0$, the component $U(t)$ is strictly decreasing for the whole lifespan.
		\item Every orbit starts from $(U,W)=(b,c)$ and moves leftward and upward toward the singular end point $(0,+\infty)$.
	\end{enumerate}
\end{theorem}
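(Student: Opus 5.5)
The five items follow almost directly from Lemma~\ref{lem:U-positive}, Corollary~\ref{cor:W-strict-inc} and Proposition~\ref{prop:FI-UW}; the plan is to take them in order. Item (1) is Proposition~\ref{prop:FI-UW} rewritten with the constant $K$ of \eqref{eq:Kdef}, so \eqref{eq:orbit} holds along the whole trajectory.

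For item (2), positivity $U>0$ is Lemma~\ref{lem:U-positive}. For the upper bound I use \eqref{eq:orbit} and drop the nonnegative term $W^2$: this gives $\sqrt U\,U/9\le K$, that is $U^{3/2}\le 9K$, hence $U\le (9K)^{2/3}=U_{\max}$. Equivalently, the right-hand side of \eqref{eq:W2-of-U} must be nonnegative, which forces exactly this range.

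For item (3), Corollary~\ref{cor:W-strict-inc} gives $W_t>0$ on $[0,T)$. Hence $W(t)\ge W(0)=c>0$, so the orbit stays in $\{W>0\}$ and moves strictly upward. Item (4) follows at once, since $U_t=-4WU$ with $W>0$ and $U>0$ gives $U_t<0$.

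For item (5), the orbit starts at $(b,c)$ and by items (3)--(4) moves leftward and upward along the curve \eqref{eq:orbit}. On this branch $W=\sqrt{K/\sqrt U-U/9}$, and as $U\downarrow0$ the expression $K/\sqrt U$ tends to $+\infty$, so this branch ends at $(0,+\infty)$. The main obstacle is showing that the trajectory actually reaches this endpoint, i.e.\ that $W\to+\infty$ and $U\to0$ rather than stalling at some interior point of the curve. I would settle this first with a comparison argument. Since $W_t\ge W^2$ and $W(0)=c>0$, we get $W(t)\ge c/(1-ct)$, so $T\le 1/c<\infty$. Global existence of the ODE on $[0,T)$ fails only if $|U|+|W|$ is unbounded. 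Now $U$ is trapped in $(0,U_{\max}]$ and $W$ is increasing, so necessarily $W(t)\to+\infty$ as $t\uparrow T$. Then \eqref{eq:orbit} gives $\sqrt U\le K/W^2\to0$, so $U\to0$, and the orbit tends to $(0,+\infty)$.
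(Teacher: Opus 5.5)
Your proposal is correct and follows the paper's proof of items (1)--(4) essentially verbatim: you invoke the first integral, drop $W^2$ in \eqref{eq:W2-of-U} to get $U^{3/2}\le 9K$, and read off the signs $W_t>0$ and $U_t=-4WU<0$ from Lemma~\ref{lem:U-positive} and Corollary~\ref{cor:W-strict-inc}. For item (5), the paper's proof only records the leftward and upward motion and leaves reaching $(0,+\infty)$ to Theorem~\ref{thm:finite-time}, which uses the same Riccati comparison you give. Your extra step, combining the continuation criterion with $U\le U_{\max}$ and the monotonicity of $W$, is what actually yields $W\to+\infty$ at $T$ even if $T<1/c$, and the paper leaves that step implicit.
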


\begin{proof}
	Item (1) is Proposition~\ref{prop:FI-UW}. Since $W^2\ge0$, \eqref{eq:W2-of-U} implies
	\[
	\frac{K}{\sqrt{U}}-\frac{U}{9}\ge0,
	\]
	which is equivalent to $U^{3/2}\le 9K$, proving \eqref{eq:Umax}. Because $c>0$ and $W$ is strictly increasing by Corollary~\ref{cor:W-strict-inc}, the solution remains in the region $W>0$ for all $t<T$.
	The sign of $U_t=-4WU$ is therefore always negative, so $U$ is strictly decreasing and the orbit moves leftward and upward.
\end{proof}

\begin{theorem}[Finite-time blow-up]\label{thm:finite-time}
	For every $b>0$ and $c>0$, the maximal existence time $T$ is finite, and
	\begin{equation}\label{eq:blowup}
		W(t)\to+\infty,
		\qquad
		U(t)\to0
		\qquad\text{as }t\uparrow T.
	\end{equation}
\end{theorem}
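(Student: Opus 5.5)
The plan is to compare $W$ with a Riccati equation, which gives blow-up in finite time. The first integral then forces $U\to0$.

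\textbf{Step 1: finite lifespan.} By Lemma~\ref{lem:U-positive} we have $U>0$ on $[0,T)$. Hence the second equation of \eqref{eq:E3-system} gives $W_t\ge W^2$ with $W(0)=c>0$. As long as $W>0$, which holds by Corollary~\ref{cor:W-strict-inc}, we can divide by $W^2$ to obtain
\[
\frac{d}{dt}\Bigl(-\frac1W\Bigr)\ge 1 .
\]
Integrating gives $1/W(t)\le 1/c-t$, that is, $W(t)\ge c/(1-ct)$. So the solution cannot be continued to or past $t=1/c$, and $T\le 1/c<\infty$.

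\textbf{Step 2: $W\to+\infty$.} The right-hand side of \eqref{eq:E3-system} is polynomial and hence locally Lipschitz. By the standard continuation criterion, a finite maximal time forces $|U|+|W|\to\infty$ as $t\uparrow T$. From Theorem~\ref{thm:phase-portrait}(2), $U$ stays in the bounded range $0<U\le U_{\max}=(9K)^{2/3}$. Therefore $W$ must be unbounded. Since $W$ is strictly increasing (Corollary~\ref{cor:W-strict-inc}), unboundedness upgrades to $W(t)\to+\infty$.

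\textbf{Step 3: $U\to0$.} By \eqref{eq:FI-UW-2},
\[
\sqrt U\,W^2\le \sqrt U\Bigl(W^2+\frac U9\Bigr)=K,
\]
so $U\le K^2/W^4\to0$ as $t\uparrow T$. As an alternative route, $U$ is positive and decreasing, so it has a limit, and \eqref{eq:W2-of-U} with $W\to\infty$ forces $K/\sqrt U\to\infty$.

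The only delicate point is the continuation argument in Step~2. One must make sure that the failure of extension is caused by blow-up of $W$ and not by $U$ leaving the allowed region. This is handled by the a priori bounds on $U$: positivity from Lemma~\ref{lem:U-positive} and the upper bound \eqref{eq:Umax}. The rest is elementary.
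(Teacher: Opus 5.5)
Your proposal is correct and follows the paper's route. You use the Riccati comparison $W_t\ge W^2$ to get $T\le 1/c$, and then the first integral $\sqrt{U}\,(W^2+U/9)=K$ to force $U\to 0$. Your Step 2 (the continuation criterion combined with the a priori bound $0<U\le U_{\max}$ and the monotonicity of $W$) makes explicit why $W\to+\infty$ at the actual maximal time $T$. The paper leaves this implicit, and the lower bound $W\ge c/(1-ct)$ alone only yields $T\le 1/c$.
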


\begin{proof}
	Because $W_t=W^2+\frac13U\ge W^2$ and $W(0)=c>0$, comparison with the Riccati equation $Y_t=Y^2$ yields
	\[
	W(t)\ge \frac{c}{1-ct}
	\]
	for as long as the solution exists. Hence blow-up occurs in finite time, with $T\le c^{-1}$.
	Since the invariant \eqref{eq:orbit} is preserved, the divergence $W(t)\to+\infty$ forces $\sqrt{U(t)}\to0$, hence $U(t)\to0$.
\end{proof}

\begin{theorem}[Terminal asymptotics]\label{thm:asymp}
	Let $T<\infty$ be the maximal existence time. Then
	\begin{equation}\label{eq:W-asymp}
		W(t)\sim \frac{1}{T-t}
		\qquad\text{as }t\uparrow T,
	\end{equation}
	and
	\begin{equation}\label{eq:U-asymp}
		U(t)\sim K^2(T-t)^4
		= b\Bigl(c^2+\frac{b}{9}\Bigr)^2 (T-t)^4
		\qquad\text{as }t\uparrow T.
	\end{equation}
\end{theorem}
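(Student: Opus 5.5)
We use the first integral \eqref{eq:orbit} together with the facts from Theorems~\ref{thm:phase-portrait} and \ref{thm:finite-time}: $W$ increases to $+\infty$, $U$ decreases to $0$, and $T\le c^{-1}$ is finite. Throughout, $\sim$ means that the ratio tends to $1$.

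\textbf{Step 1: $W_t/W^2\to1$.} Divide the $W$-equation by $W^2$:
\[
\frac{W_t}{W^2}=1+\frac{U}{3W^2}.
\]
Since $U\to0$ and $W\to\infty$, the correction term tends to $0$. It is in fact very small: \eqref{eq:W2-of-U} gives $W^2\ge \tfrac12 K U^{-1/2}$ for small $U$, so $U/W^2\lesssim U^{3/2}/K$.

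\textbf{Step 2: blow-up rate of $W$.} Integrate $\frac{d}{dt}(-1/W)=W_t/W^2$ from $t$ to $T$, using $1/W(T^-)=0$:
\[
\frac{1}{W(t)}=\int_t^T\Bigl(1+\frac{U(s)}{3W(s)^2}\Bigr)\,ds=(T-t)\bigl(1+o(1)\bigr).
\]
This gives \eqref{eq:W-asymp}. A sharper statement follows from the bound $U/W^2\lesssim U^{3/2}$ with $U\to0$: the error is $1/W=(T-t)+o(T-t)$. Step 3 will bootstrap this to $O((T-t)^7)$.

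\textbf{Step 3: asymptotics of $U$ from the invariant.} Write \eqref{eq:orbit} as
\[
\sqrt U=\frac{K}{W^2+U/9}.
\]
Since $U\to0$ and $W\to\infty$, the denominator is $W^2(1+o(1))$, so $\sqrt U\sim K/W^2\sim K(T-t)^2$. Squaring gives $U\sim K^2(T-t)^4$, and substituting $K$ from \eqref{eq:Kdef} yields \eqref{eq:U-asymp}. Feeding this back into Step 2 gives $U/W^2=O((T-t)^6)$, which refines the expansion to $1/W=(T-t)+O((T-t)^7)$. This refinement is not needed for the statement itself.

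\textbf{Alternative route.} One could instead use Lemma~\ref{lem:U-positive} directly, $U=b\,e^{-4\int_0^tW}$. With $W\approx 1/(T-s)$ this gives $U\approx C(T-t)^4$ immediately. However, identifying the constant $C$ requires controlling $\int_0^t\bigl(W-\tfrac1{T-s}\bigr)\,ds$, which is awkward; the first-integral route avoids this.

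\textbf{Main obstacle.} The only delicate point is justifying the limit $1/W(t)\to0$ at $T$ and the passage to the integral in Step 2. This follows from Theorem~\ref{thm:finite-time}, since $W$ is monotone and diverges, so the improper integral is legitimate. Everything else consists of limits of ratios.
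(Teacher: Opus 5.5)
Your proof is correct and follows the paper's route: $W_t/W^2=1+U/(3W^2)\to 1$ gives the Riccati rate, and the first integral \eqref{eq:orbit} then gives $\sqrt U\sim K/W^2\sim K(T-t)^2$. The one thing you add is an explicit derivation of $W\sim (T-t)^{-1}$, by integrating $\frac{d}{dt}(-1/W)$ over $[t,T)$ with $1/W(T^-)=0$, a step the paper only asserts.
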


\begin{proof}
	By Theorem~\ref{thm:finite-time}, $W(t)\to+\infty$ and $U(t)\to0$. From the first integral
	\eqref{eq:orbit},
	\[
	\sqrt{U}\Bigl(W^2+\frac{U}{9}\Bigr)=K.
	\]
	Since $U\to0$, this yields
	\[
	\sqrt{U}\,W^2\to K,
	\qquad\text{hence}\qquad
	U\sim \frac{K^2}{W^4}.
	\]
	Substituting into the $W$-equation gives
	\[
	W_t=W^2+\frac13U=W^2+o(W^2).
	\]
	Therefore
	\[
	\frac{W_t}{W^2}\to 1,
	\]
	which implies the Riccati asymptotic \eqref{eq:W-asymp}. Inserting
	$W^2\sim (T-t)^{-2}$ into $U\sim K^2W^{-4}$ yields \eqref{eq:U-asymp}.
\end{proof}

\subsection{An $x$-dependent inviscid family and localization of the first singularity}\label{sec:x-dependent-family}
%============================================================

We now consider an $x$-dependent family of initial data
\[
c=c(x)\in\R,
\qquad
b=b(x)\ge 0,
\qquad x\ge 0,
\]
and study the decoupled system
\begin{equation}\label{eq:system-x}
	U_t(t,x)=-4W(t,x)U(t,x),\qquad W_t(t,x)=W(t,x)^2+\frac13 U(t,x),
\end{equation}
with
\begin{equation}\label{eq:data-x}
	U(0,x)=b(x),\qquad W(0,x)=c(x).
\end{equation}
For each fixed $x\ge 0$, this is exactly the same ODE system as before, now with parameters
$(c,b)=(c(x),b(x))$.

\begin{proposition}[Pointwise classification for the $x$-family]\label{prop:x-pointwise}
	Fix $x\ge 0$. Then the solution of \eqref{eq:system-x}--\eqref{eq:data-x} at that $x$ satisfies:
	\begin{enumerate}
		\item If $b(x)>0$, then the solution blows up in finite time.
		\item If $b(x)=0$ and $c(x)>0$, then the solution blows up in finite time.
		\item If $b(x)=0$ and $c(x)\le 0$, then the solution is global and remains regular for all $t\ge 0$.
	\end{enumerate}
\end{proposition}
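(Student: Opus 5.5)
The plan is to treat each fixed $x$ as an instance of the scalar ODE system \eqref{eq:E3-system}, with data $(b,c)=(b(x),c(x))$. The one new feature compared with Section~\ref{sec:horizontal-dynamics} is that in item (1) the sign of $c$ is no longer assumed positive, so Theorem~\ref{thm:finite-time} cannot be quoted directly.

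\textbf{Item (1).} Suppose $b(x)>0$. The proof of Lemma~\ref{lem:U-positive} never uses the sign of $c$, so $U(t)=b\exp(-4\int_0^tW)>0$ on the whole lifespan. Corollary~\ref{cor:W-strict-inc} then gives $W_t=W^2+\tfrac13U>0$, so $W$ is strictly increasing.

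I first show that $W$ becomes positive at some time $t_1<T$. If $c>0$, take $t_1=0$. If $c\le 0$, suppose instead that $W(t)\le0$ on the whole interval $[0,T)$. Then $U_t=-4WU\ge0$, so $U(t)\ge b$. This gives $W_t\ge b/3$, hence
\[
W(t)\ge c+\tfrac{b}{3}t.
\]
On the other hand, on this interval $W$ is trapped in $[c,0]$, while $U$ satisfies $U_t\le 4|c|U$. So $(U,W)$ cannot blow up in finite time, and $T=\infty$. Combined with the linear lower bound on $W$, this contradicts $W\le 0$ once $t>3|c|/b$. Hence $W(t_1)>0$ for some $t_1\le 3|c|/b$.

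Now restart the system at $t_1$ with data $(U(t_1),W(t_1))$, both positive. Theorem~\ref{thm:finite-time}, applied by time translation, gives blow-up before $t_1+W(t_1)^{-1}$. The first integral of Proposition~\ref{prop:FI-UW} also remains valid, since its proof is sign-independent, so the asymptotics of Theorem~\ref{thm:asymp} carry over.

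\textbf{Items (2) and (3).} Here $b(x)=0$. By uniqueness for the locally Lipschitz ODE, $U\equiv0$ is the solution with $U(0)=0$; equivalently, the formula of Lemma~\ref{lem:U-positive} gives $U=0$. The system then reduces to the Riccati equation $W_t=W^2$, whose solution is
\[
W(t)=\frac{c}{1-ct}.
\]
If $c>0$, this blows up at $T=1/c$, which proves (2). If $c\le0$, the denominator satisfies $1-ct\ge1$ for all $t\ge0$, so $|W(t)|\le|c|$ and $W$ is smooth for all time. This proves (3).

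The only genuinely new point, and the expected main obstacle, is the subcase $b>0$, $c\le0$ of item (1). It is resolved by the contradiction argument above: while $W\le0$, the quantity $U$ cannot decay, and so it pushes $W$ upward at a linear rate.
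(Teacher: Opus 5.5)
Your proof is correct. For items (2) and (3) it is the paper's argument: $b=0$ forces $U\equiv 0$, and the explicit Riccati solution $W=c/(1-ct)$ settles both cases.

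For item (1) you go beyond the paper. The paper's proof only says that ``the earlier analysis applies directly.'' But Theorem~\ref{thm:finite-time}, through the comparison $W\ge c/(1-ct)$, and Theorem~\ref{thm:phase-portrait} both use $W(0)=c>0$. In the $x$-family, $c(x)\in\R$ is unrestricted, so as written the paper's proof only covers $b(x)>0$, $c(x)>0$.

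Your extra step fills this gap, and it is exactly what item (1) needs to hold as stated. Positivity of $U$ does not depend on the sign of $c$. While $W\le 0$ you get $U\ge b$, hence $W_t\ge b/3$. Together with the a~priori bounds $c\le W\le 0$ and $U\le be^{4|c|t}$, this forces $W$ to become positive at some $t_1<T$. You then restart the autonomous system from positive data.

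One small slip: the side bound $t_1\le 3|c|/b$ fails for $c=0$. There $W(0)=0$, although $W_t(0)=b/3>0$ makes $W$ positive immediately afterwards. The bound is never used, so the argument is unaffected.
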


\begin{proof}
	If $b(x)>0$, the earlier analysis applies directly. If $b(x)=0$, then the first equation in
	\eqref{eq:system-x} gives $U(t,x)\equiv 0$. Therefore $W$ satisfies
	\[
	W_t(t,x)=W(t,x)^2,\qquad W(0,x)=c(x),
	\]
	with explicit solution
	\[
	W(t,x)=\frac{c(x)}{1-c(x)t}.
	\]
	This solution blows up in finite time if and only if $c(x)>0$, in which case the blow-up time is
	$T(x)=1/c(x)$. If $c(x)\le 0$, the solution is global.
\end{proof}

\begin{theorem}[Criterion for the first singularity to occur only at $x=0$]\label{thm:first-singularity-x0}
	Assume that for each $x\ge 0$ the corresponding pointwise solution has blow-up time $T(x)\in(0,\infty]$.
	If
	\begin{equation}\label{eq:T-strict-min}
		T(0)<T(x)\qquad\text{for every }x>0,
	\end{equation}
	then the first singularity of the $x$-dependent family occurs only at $x=0$, at time $T(0)$.
\end{theorem}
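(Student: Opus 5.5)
The claim is essentially definitional once "first singularity of the $x$-dependent family" is made precise. Because \eqref{eq:system-x} has no $x$-derivatives, it decouples completely into independent ODEs, one for each $x\ge0$. The family is therefore regular on $[0,t]\times\{x\}$ exactly when $t<T(x)$. I would define the first singular time of the family as
\[
T_*:=\inf_{x\ge 0}T(x),
\]
and the set of first-singularity points as $S:=\{x\ge0: T(x)=T_*\}$, meaning the points where the pointwise solution ceases to exist at time $T_*$. The theorem then amounts to showing $T_*=T(0)$ and $S=\{0\}$.

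\textbf{Step 1: identify the first time.} By \eqref{eq:T-strict-min}, $T(0)\le T(x)$ for every $x\ge0$, so $\inf_{x\ge0}T(x)=T(0)$ and this infimum is attained at $x=0$. We have $T(0)<\infty$ whenever the apex data fall under Proposition~\ref{prop:x-pointwise}(1) or (2). In fact the strict inequality $T(0)<T(x)$ already forces $T(0)<\infty$ as soon as there is at least one $x>0$.

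\textbf{Step 2: localization.} For $x>0$ we have $T(x)>T(0)$. Hence for each such $x$ the pointwise solution $(U,W)(\cdot,x)$ is smooth on $[0,T(0)]$ and has a finite value at $t=T(0)$. At $x=0$, by contrast, Theorem~\ref{thm:finite-time} when $b(0)>0$, or the explicit Riccati formula when $b(0)=0$ and $c(0)>0$, gives $W(t,0)\to+\infty$ as $t\uparrow T(0)$. So $S=\{0\}$: at time $T(0)$ the only point where the solution fails to exist is the apex.

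\textbf{Main obstacle.} Pointwise finiteness at $t=T(0)$ for each $x>0$ does not by itself make the solution locally bounded near $T(0)$ on $x>0$, since $T(x)$ could accumulate to $T(0)$ as $x\to0$. I would therefore state the conclusion in the pointwise sense that the theorem's hypothesis naturally supports. I would then add a remark: if $T(\cdot)$ is lower semicontinuous and $T(x)>T(0)$ on $x>0$, then $\inf_{x\ge\delta}T(x)>T(0)$ on each compact subset of $[\delta,\infty)$. Combined with continuous dependence of the ODE on $(b,c)$, this gives uniform boundedness of $(U,W)$ on $[0,T(0)]\times[\delta,R]$. This confirms in the stronger, uniform sense that the singularity at time $T(0)$ is located only at $x=0$.
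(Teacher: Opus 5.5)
Your proposal is correct and follows the paper's route. The paper's proof is the same one-line observation that the strict-minimum hypothesis \eqref{eq:T-strict-min} makes $x=0$ the unique point where $\inf_{x\ge0}T(x)$ is attained, and your Steps 1--2 spell this out with explicit definitions of $T_*$ and $S$. Two side notes, neither of which affects the argument: your closing remark on local uniformity goes beyond what the paper claims (the lower semicontinuity of $T$ you assume is automatic from continuous dependence once $b,c$ are continuous in $x$), and Theorem~\ref{thm:finite-time} is stated only for $c>0$, so citing it for $b(0)>0$ is slightly imprecise, but localization only needs $T(0)$ to be the maximal existence time.
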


\begin{proof}
	Condition \eqref{eq:T-strict-min} says precisely that the earliest singular time over all $x\ge 0$ is attained uniquely at $x=0$.
	Therefore the first singularity of the family occurs only at $x=0$.
\end{proof}

\begin{example}[Gaussian family]\label{ex:gaussian-family}
	The smooth family
	\begin{equation}\label{eq:gaussian-data}
		c(x)=A e^{-A_1x^2},
		\qquad
		b(x)=B e^{-B_1x^2},
		\qquad A,A_1,B,B_1>0,
	\end{equation}
	satisfies the weighted conditions
	\begin{equation}\label{eq:weighted-data}
		\int_0^\infty x^4 c(x)^2\,dx<\infty,
		\qquad
		\int_0^\infty x^4 b(x)\,dx<\infty.
	\end{equation}
	The first condition is immediate from the Gaussian decay of $c(x)$, and the second matches the natural weighted integrability requirement for the initial profile $U(0,x)=b(x)$.
	
	Moreover,
	\[
	c(x)=A-AA_1x^2+O(x^4),
	\qquad
	b(x)=B-BB_1x^2+O(x^4)
	\qquad\text{as }x\downarrow 0.
	\]
	Thus both positive components are maximized at $x=0$ and decrease to second order away from the origin. In particular, the pointwise trajectory at $x=0$ lies on the nonlinear branch $b(0)>0$, so finite-time blow-up there follows automatically from the phase-portrait analysis. The local question is then whether nearby points $x>0$ blow up later than $x=0$. In this Gaussian family, the quadratic correction in $c(x)$ tends to delay blow-up away from the origin because the initial blow-up amplitude becomes smaller as $x$ increases, whereas the quadratic correction in $b(x)$ also tends to delay blow-up because the positive forcing becomes smaller away from $x=0$. Hence both quadratic effects act in the same direction and favor $x=0$ as a strict local minimizer of the blow-up time map $T(x)$. This makes the Gaussian family especially compatible with the first-singularity condition $T(0)<T(x)$ for small $x>0$.
\end{example}

%============================================================
\section{A convective axis reduction: the system $(e4)$}\label{sec:e4}

In this section we study the convective axis reduction $(e4)$ associated with the symmetry-axis dynamics of the Euler-reduced system. Unlike the closed apex ODE, convection remains present away from the origin; nevertheless, all convective terms vanish at $x=0$, so the exact apex blow-up mechanism persists there. We impose the initial data
\begin{equation}\label{eq:E4-data}
u(0,x)=B x^2 e^{-B_1x^2},
\qquad
w(0,x)=A e^{-A_1x^2},
\qquad A,B,A_1,B_1>0.
\end{equation}

The key observation is that, for even classical solutions, the symmetry center $x=0$ is dynamically closed: all the explicitly $x$-weighted convective terms vanish there, and the trace at $x=0$ satisfies exactly the same positive ODE system analyzed in Section~\ref{sec:horizontal-dynamics}. Thus the apex blow-up mechanism survives even though the full axis-restricted dynamics is no longer convection free away from the origin. 

\begin{lemma}[Closed apex dynamics for the axis reduction]\label{lem:E4-center-ODE}
Let $(u,w)$ be a classical solution of the axis-restricted system \eqref{eq:ray-1D-system-general} with initial data \eqref{eq:E4-data}. Define the apex trace
\[
u_c(t):=u(t,0),\qquad w_c(t):=w(t,0).
\]
Then
\begin{equation}\label{eq:E4-center-system}
u_c'(t)=-2w_c(t)u_c(t),\qquad w_c'(t)=w_c(t)^2+\frac13u_c(t)^2.
\end{equation}
Moreover, the initial data satisfy
\begin{equation}\label{eq:E4-center-data}
u_c(0)=0,\qquad w_c(0)=A.
\end{equation}
Consequently,
\begin{equation}\label{eq:E4-center-explicit}
u_c(t)\equiv 0,\qquad w_c(t)=\frac{A}{1-At}
\qquad\text{for }0\le t<\frac1A.
\end{equation}
\end{lemma}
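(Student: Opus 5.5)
The plan is to evaluate the axis-restricted system \eqref{eq:ray-1D-system-general} (with $v$ renamed $w$) at $x=0$, which requires enough regularity for the $x$-weighted terms to vanish. For a classical solution we take $u,w\in C^1$ in $t$ and $C^2$ in $x$ near $x=0$, and also assume the mixed derivative $w_{tx}$ is continuous, which the equation itself requires in order to make sense. Then each of
\[
x\,w\,u_x,\qquad \tfrac13 x\,w_{tx},\qquad \tfrac13 x^2\bigl((w_x)^2-w\,w_{xx}\bigr)
\]
is a bounded function multiplied by $x$ or $x^2$, so it vanishes at $x=0$. This is exactly Remark~\ref{rem:convective-terms}, and also Claim (B) of Theorem~\ref{thm:symmetry-axis-ridge-functions}. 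Setting $x=0$ therefore gives \eqref{eq:E4-center-system}, with $\partial_t u(t,0)=u_c'(t)$ by continuity of $u_t$. Evenness of the solution is not needed for this step; it only guarantees that $x=0$ is a critical point. What we must rule out is a singular behaviour of $w_{tx}$ at the origin, and the classical-solution hypothesis excludes it.

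Next we read off the initial data from \eqref{eq:E4-data}. We have $u(0,0)=B\cdot 0\cdot 1=0$ and $w(0,0)=A$, which is \eqref{eq:E4-center-data}.

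Finally we solve the ODE. The $u_c$ equation is linear in $u_c$ with continuous coefficient $-2w_c$, so on any interval where $w_c$ exists,
\[
u_c(t)=u_c(0)\exp\Bigl(-2\int_0^t w_c\Bigr)=0.
\]
This is the same argument as Lemma~\ref{lem:U-positive}, now with $b=0$; alternatively one can invoke uniqueness for the locally Lipschitz system \eqref{eq:E4-center-system}. Substituting $u_c\equiv0$ reduces the second equation to the Riccati equation $w_c'=w_c^2$ with $w_c(0)=A>0$. Its unique solution is $w_c(t)=A/(1-At)$ on $[0,1/A)$, which is case (2) of Proposition~\ref{prop:x-pointwise}.

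The main obstacle is conceptual rather than computational. The statement presupposes that a classical solution exists up to the relevant time, and the formula \eqref{eq:E4-center-explicit} holds on $[0,\min(T_{\rm cl},1/A))$, where $T_{\rm cl}$ is the classical lifespan. We will state the result in that form. We will also add that if the solution remains classical up to $1/A$, then $w_c$ forces blow-up exactly at $T=1/A$. The lemma asserts nothing beyond this, so no existence theory for \eqref{eq:ray-1D-system-general} is needed.
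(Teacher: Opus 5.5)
Your proposal is correct and follows the paper's proof. Like the paper, you evaluate \eqref{eq:ray-1D-system-general} at $x=0$ where the $x$- and $x^2$-weighted terms drop out, read off $u_c(0)=0$ and $w_c(0)=A$, get $u_c\equiv 0$ from the linear equation, and solve the Riccati equation. Your explicit regularity assumptions (bounded $w_{tx}$, $w_{xx}$, $u_x$ near $x=0$) and the lifespan caveat $[0,\min(T_{\rm cl},1/A))$ state precisely what the paper's proof and its claim on $[0,1/A)$ leave tacit.
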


\begin{proof}
At $x=0$, every convective term in \eqref{eq:ray-1D-system-general} vanishes because it contains at least one factor of $x$ or $x^2$. Hence the apex trace obeys \eqref{eq:E4-center-system}. Since
\[
u(0,x)=Bx^2e^{-B_1x^2},\qquad w(0,x)=Ae^{-A_1x^2},
\]
one has \eqref{eq:E4-center-data}. The first equation in \eqref{eq:E4-center-system} with $u_c(0)=0$ gives $u_c(t)\equiv 0$, and then the second equation reduces to the Riccati ODE
\[
w_c'(t)=w_c(t)^2,\qquad w_c(0)=A,
\]
whose solution is exactly \eqref{eq:E4-center-explicit}.
\end{proof}

\begin{theorem}[Finite-time apex blow-up for the convective axis reduction]\label{thm:E4-center-blowup}
Let $(u,w)$ solve the axis-restricted system \eqref{eq:ray-1D-system-general} with initial data \eqref{eq:E4-data}. Then the apex trace blows up in finite time at
\[
T=\frac1A.
\]
More precisely,
\begin{equation}\label{eq:E4-center-blowup-profile}
u_c(t)\equiv 0,\qquad
w_c(t)=\frac{A}{1-At}=\frac{1}{T-t}
\qquad\text{for }0\le t<T.
\end{equation}
In particular,
\[
w_c(t)\to+\infty
\qquad\text{as }t\uparrow T.
\]
\end{theorem}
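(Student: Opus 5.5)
The plan is to reduce Theorem~\ref{thm:E4-center-blowup} to Lemma~\ref{lem:E4-center-ODE}, which already gives the apex trace explicitly, and then check what the phrase ``blows up at $T=1/A$'' actually requires. Throughout, I read $w$ as the second unknown $v$ of \eqref{eq:ray-1D-system-general}, as the paper does in Section~\ref{sec:e4}. I would also interpret ``classical solution'' as a solution that is $C^1$ in $t$ and $C^2$ in $x$ up to and including $x=0$. This regularity makes the terms $x v u_x$, $x v_{tx}$, $x^2 (v_x)^2$ and $x^2 v v_{xx}$ all vanish at $x=0$.

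\textbf{Step 1.} Evaluate \eqref{eq:ray-1D-system-general} at $x=0$ for every $t$ in the existence interval. The $\tfrac13 x v_{tx}$ term on the left of the second equation carries a factor of $x$, so it drops out. This gives exactly \eqref{eq:E4-center-system}, which is Lemma~\ref{lem:E4-center-ODE} and agrees with the apex ODE \eqref{eq:ray-1D-system-general-x-0}.

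\textbf{Step 2.} The initial data \eqref{eq:E4-data} give $u_c(0)=0$ and $w_c(0)=A$. The equation $u_c'=-2w_c u_c$ is linear in $u_c$, so uniqueness (for instance the integrating-factor formula $u_c(t)=u_c(0)\exp(-2\int_0^t w_c)$, as in Lemma~\ref{lem:U-positive}) forces $u_c\equiv0$. Then $w_c$ solves $w_c'=w_c^2$ with $w_c(0)=A$, whose unique solution is $A/(1-At)=1/(T-t)$ with $T=1/A$. This is case (2) of Proposition~\ref{prop:x-pointwise} at $x=0$.

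\textbf{Step 3.} Conclude that the apex trace cannot be continued past $T$, because $w_c(t)\to+\infty$ as $t\uparrow T$. The statement is about the apex trace, so it suffices to argue as follows. As long as a classical solution exists on $[0,t_1)$, its trace must equal the explicit profile on $[0,\min(t_1,T))$. Hence no classical solution can exist on any interval extending beyond $T$.

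The main obstacle is the hidden hypothesis that a classical solution actually persists up to $T$. Nothing in the argument rules out an earlier breakdown away from the apex. I would state the theorem as: the solution cannot remain classical past $T$, and if it remains classical on $[0,T)$, then \eqref{eq:E4-center-blowup-profile} holds. This formulation is the rigorous content and needs no local well-posedness theory for the degenerate equation $v_t+\tfrac13 x v_{tx}=\cdots$. A full existence-up-to-$T$ claim would require a separate local theory for that equation, which I would not attempt here.
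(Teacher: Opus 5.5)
Your argument is correct and is the paper's own proof: the paper likewise deduces the theorem from Lemma~\ref{lem:E4-center-ODE}. It evaluates \eqref{eq:ray-1D-system-general} at $x=0$, uses $u_c(0)=0$ to force $u_c\equiv 0$, and solves the Riccati equation $w_c'=w_c^2$, $w_c(0)=A$. Your Step~3 caveat is a correct sharpening rather than a gap: the argument only shows that no classical solution extends past $T=1/A$, and that the profile holds on whatever part of $[0,T)$ the solution exists, because nothing excludes earlier breakdown away from $x=0$. This makes explicit an existence hypothesis that the paper's proof leaves tacit.
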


\begin{proof}
By Lemma~\ref{lem:E4-center-ODE}, the apex trace satisfies
\[
u_c(t)\equiv 0,\qquad w_c(t)=\frac{A}{1-At}.
\]
Therefore the blow-up time is exactly $T=A^{-1}$ and the profile is given by \eqref{eq:E4-center-blowup-profile}.
\end{proof}

\begin{remark}[Why the center mechanism is more robust than a traveling-wave ansatz]
	The reduction at $x=0$ is exact and requires no special spatial profile. In particular, it proves finite-time breakdown for \emph{every} even classical solution with data \eqref{eq:E4-data}. By contrast, a traveling-wave ansatz only produces one distinguished subclass of solutions.
\end{remark}

\begin{remark}[Interpretation of Theorem~\ref{thm:E4-center-blowup}]
If Theorem~\ref{thm:E4-center-blowup} is taken as established, then the explicit axis reduction continues to govern the \emph{apex dynamics} at \(x=0\): the first-order ridge-flatness constraints are propagated at the apex, so the pointwise ODE system \eqref{eq:ray-1D-system-general-x-0} remains the correct leading-order mechanism for the blow-up there. What this theorem does \emph{not} give by itself is a closed-form description of the full background for general \(x>0\) and general \(\theta\). Accordingly, the explicit part of the present theory is the apex blow-up mechanism, while the extension away from the apex remains conditional and is delegated to the background/control problem for the full wedge.
\end{remark}

\begin{conjecture}[Existence of a symmetry-axis compatible background blow-up profile]
	\label{conj:sectorial-background-blowup}
	There exist a time $T>0$ and smooth functions
	\[
	V,U,G,P \in C^\infty\bigl([0,T)\times \mathbb{R}\times[-\tfrac{\pi}{2},\tfrac{\pi}{2}]\bigr),
	\]
	with smooth initial data of the form
	\[
	a(x,\theta)=f_1(x^2,\phi(\theta)),\quad
	b(x,\theta)=f_2(x^2,\phi(\theta)),\quad
	c(x,\theta)=f_3(x^2,\phi(\theta)),
	\]
	for some smooth functions $f_1,f_2,f_3$ and some smooth even angular profile $\phi(\theta)$ adapted to the symmetry axes $\theta=0,\pm \frac{\pi}{2}$, such that the following hold.

	\smallskip
	\noindent
	{\rm(1) Background evolution on the signed polar strip.}
	The quadruple $(V,U,G,P)$ solves the background system on
	\[
	[0,T)\times \mathbb{R}\times \Bigl[-\frac{\pi}{2},\frac{\pi}{2}\Bigr],
	\]
	with initial conditions
	\[
	V(0,x,\theta)=a(x,\theta),\quad U(0,x,\theta)=b(x,\theta),\quad G(0,x,\theta)=c(x,\theta),
	\]
	and satisfies the compatibility, regularity, parity, and ridge-flatness conditions required by the rev3 formulation, in particular at
	\[
	x=0,\qquad \theta=0,\qquad \theta=\pm \frac{\pi}{2}.
	\]

	\smallskip
	\noindent
	{\rm(2) Preservation of symmetry-axis flatness.}
	The solution remains even in $(r,z)$ for $0\le t<T$, so the corresponding ridge-flatness conditions are preserved automatically on the symmetry axes
	\[
	\theta=0,\qquad \theta=\pm \frac{\pi}{2}.
	\]

	\smallskip
	\noindent
	{\rm(3) Apex blow-up dynamics.}
	Along the apex trace $x=0$, the solution reduces to the closed apex ODE dynamics identified in Section~\ref{sec:horizontal-dynamics}, and the corresponding blow-up law is preserved up to time $T$. In particular,
	\[
	|V(t,0,\theta_*)|\sim \frac{c_*}{T-t}
	\qquad\text{as }t\uparrow T,
	\]
	for some constant $c_*>0$ and for each symmetry-axis angle $\theta_*\in\{0,\pm \tfrac{\pi}{2}\}$.

	\smallskip
	\noindent
	{\rm(4) Global background size bounds.}
	There exist constants $C_1,C_2>0$ such that for all $t\in[0,T)$,
	\[
	\|V(t)\|_{L^\infty_{x,\theta}}\le \frac{C_1}{T-t},
	\qquad
	\|U(t)\|_{L^\infty_{x,\theta}}\le \frac{C_2}{T-t},
	\qquad
	\|G(t)\|_{L^\infty_{x,\theta}}\le \frac{C_3}{T-t}.
	\]

	\smallskip
	\noindent
	{\rm(5) Stability-compatible derivative bounds.}
	For each finite derivative order required by the perturbative bootstrap, the corresponding adapted derivatives of $U,V,G,P$ obey bounds of the same critical scale as $t\uparrow T$, as summarized in \eqref{eq:bg-bounds-new}.

	\smallskip
	\noindent
	{\rm(6) Closure of the perturbative bootstrap.}
	For sufficiently small perturbations of this background in the norms used in the paper, the bootstrap assumptions can be closed up to time $T$, and the perturbed solution preserves the same apex singularity scenario.
\end{conjecture}

\section{Perturbation PDEs}\label{sec:remainder-derivation}
We derive the remainder system around a prescribed background whose ridge/apex behavior is the one identified above. Throughout this section we work in the signed polar variables
\[
x\in\R,\qquad \theta\in\Bigl[-\frac{\pi}{2},\frac{\pi}{2}\Bigr],
\]
with
\[
r=x\cos\theta,\qquad z=x\sin\theta.
\]
This choice is consistent with the rev3 geometry: the vertical axes are the boundary lines $\theta=\pm \tfrac{\pi}{2}$, while the horizontal axis is represented by $\theta=0$ together with both signs of $x$. The full solutions are written as the sum of the background fields and the remainders:
\begin{equation}\label{eq:full_def}
	\left\{\begin{aligned}
		&\bar u=U(t,x,\theta)+u(t,x,\theta),\\
		&\bar v=V(t,x,\theta)+v(t,x,\theta),\\
		&\bar g=-G(t,x,\theta)-g(t,x,\theta),\\
		&\bar p=P(t,x,\theta)+p(t,x,\theta),\\
	\end{aligned}\right.
\end{equation}

	After substituting ~\eqref{eq:full_def} into ~\eqref{E2}(1,2,3,4), we separate the full system into the background equations for $(V,U,G,P)$ and the exact remainder equations for $(v,u,g,p)$. All pure-background terms are kept in the background equations, so the remainder system contains only linear couplings to the background and genuinely nonlinear remainder--remainder interactions:

\begin{equation}\label{eq:zero-1234}
\left\{\begin{aligned}
u_{t}&=-2 (u V+vU)+\tfrac12\bigl(U_{\theta} (g+v)+u_{\theta} (G+V)\bigr)\sin (2\theta ) \\
&+\bigl(x U_{x} g+x u_{x} G\bigr)\sin ^2(\theta )-\bigl(x U_{x} v+x u_{x} V\bigr)\cos ^2(\theta )  +N_1\\[2mm]
v_{t}&=2 (uU-vV)-\tfrac{1}{x}p_{x}+\tfrac{\tan (\theta ) }{x^2}p_{\theta }+\tfrac12\bigl(v_{\theta} (G+V)+V_{\theta }(g+v)\bigr)\sin (2\theta ) \\
&+\bigl(x v_{x} G+x V_{x}g\bigr)\sin ^2(\theta )-\bigl(x V_{x} v+x v_{x} V\bigr)\cos ^2(\theta ) +N_2\\[2mm]
g_{t}&=2 gG+\tfrac{1}{x}p_{x}+\tfrac{\cot (\theta ) }{x^2}p_{\theta }+\tfrac12\bigl(g_{\theta} (G+V)+G_{\theta }(g+v)\bigr)\sin (2\theta ) \\
&+\bigl(x g_{x} G+x G_{x}g\bigr)\sin ^2(\theta )-\bigl(x G_{x} v+x g_{x} V\bigr)\cos ^2(\theta )+N_3,\\[2mm]
2v-g&=x g_{x}\sin ^2(\theta )-x v_{x}\cos ^2(\theta )  +\tfrac12\sin (2\theta ) \bigl(g_{\theta }+v_{\theta }\bigr).
	\end{aligned}\right.
\end{equation}

Where the nonlinear perturbation terms are given by
\begin{equation}\label{eq:nonlinear-123}
	\left\{\begin{aligned}
		N_1:&=-2uv+\tfrac12\sin(2\theta)(g+v)u_\theta+xu_x\bigl(g\sin^2(\theta)-v\cos^2(\theta)\bigr),\\[2mm]
		N_2:&=u^2-v^2+\tfrac12\sin(2\theta)(g+v)v_\theta+xv_x\bigl(g\sin^2(\theta)-v\cos^2(\theta)\bigr),\\[2mm]
		N_3:&=g^2+\tfrac12\sin(2\theta)(g+v)g_\theta+xg_x\bigl(g\sin^2(\theta)-v\cos^2(\theta)\bigr).
	\end{aligned}\right.
\end{equation}

The divergence-free condition for the perturbation $(v,g)$ \eqref{eq:zero-1234}(4) can be solved with the stream function $\psi(t,x,\theta)$ defined below:
\begin{equation}\label{eq:vg-psi}
	\left\{\begin{aligned}
		v&=\,\,\,\psi+x\psi_x\sin^2(\theta)+\tfrac12\psi_\theta\sin(2\theta),\\
		g&=2\psi+x\psi_x\cos^2(\theta)-\tfrac12\psi_\theta\sin(2\theta).
	\end{aligned}\right.
\end{equation}

These two equations are the polar coordinate version of \eqref{E2}(5,6).

\subsection{\texorpdfstring{Getting rid of $(p_\theta,p_x)$}{Getting rid of (p\_theta,p\_x)}}
Our next step is to get rid of $(p_\theta,p_x)$. 
Define $\omega$ as
\begin{equation}\label{eq:Omega-omega-def}
\begin{aligned}
	\omega:&=xg_x+xv_x-g_\theta\tan\theta +v_\theta\cot\theta,\\
\end{aligned}
\end{equation}
So we obtain
\begin{equation}\label{eq:gx-Vx}
\begin{aligned}
	g_x&=\tfrac{1}{x}\bigl(\omega-xv_x+g_\theta\tan\theta -v_\theta\cot\theta\bigr).\\
\end{aligned}
\end{equation}

Now we rewrite ~\eqref{eq:zero-1234}$(2,3)$ as:
\begin{equation}\label{eq:px-ptheta}
\left\{\begin{aligned}
	v_t&=A+N_2-\tfrac{1}{x}	p_x+\tfrac{\tan\theta}{x^2}p_\theta,\\
	g_t&=B+N_3+\tfrac{1}{x}p_x+\tfrac{\cot\theta}{x^2}p_\theta,\\
\end{aligned}\right.
\end{equation}

where $(A,B)$ are defined as
\begin{equation}\label{eq:AB-def}
\left\{\begin{aligned}
			A:&=2 (uU-vV)+\tfrac12\bigl(v_{\theta} (G+V)+V_{\theta }(g+v)\bigr)\sin (2\theta ) \\
			&+\bigl(x v_{x} G+x V_{x}g\bigr)\sin ^2(\theta )-\bigl(x V_{x} v+x v_{x} V\bigr)\cos ^2(\theta ),\\[2mm]
			B:&=2 gG+\tfrac12\bigl(g_{\theta} (G+V)+G_{\theta }(g+v)\bigr)\sin (2\theta ) \\
			&+\bigl(x g_{x} G+x G_{x}g\bigr)\sin ^2(\theta )-\bigl(x G_{x} v+x g_{x} V\bigr)\cos ^2(\theta ).
\end{aligned}\right.
\end{equation}

The solutions for $(p_x,p_\theta)$ then become
\begin{equation}\label{eq:px-ptheta-2}
\left\{
\begin{aligned}
	\tfrac{1}{x}p_x&=\cos ^2(\theta ) \left(A+N_2-v_{t}\right)-\sin ^2(\theta ) \left(B+N_3-g_{t}\right)\\[2mm]	
	\tfrac{1}{x^2}p_\theta&=-\sin (\theta ) \cos (\theta ) \left(A+B+N_2+N_3-g_{t}-v_{t}\right)
\end{aligned}
\right.
\end{equation}

Using $p_{x\theta}=p_{\theta x }$ to get rid of $p$, and using $g_{xt}$ of ~\eqref{eq:gx-Vx} to simplify the result, we finally obtain:
\begin{equation}\label{eq:omegat}
\begin{aligned}
	\omega_t&=L_2+M_2,
\end{aligned}
\end{equation}
And
\begin{equation}\label{eq:omegat-2}
	\left\{
	\begin{aligned}
		L_2&=x \left(A_{x}+B_{x}\right)+A_{\theta}\cot (\theta ) -B_{\theta}\tan (\theta ) \\[2mm]
		M_2&=x \left(N_{2x}+N_{3x}\right)+N_{2\theta}\cot (\theta ) -N_{3\theta}\tan (\theta ).
	\end{aligned}\right.
\end{equation}

\noindent Substituting $(A,B)$ of ~\eqref{eq:AB-def} into ~\eqref{eq:omegat}(2) and using $(g_x,g_{xx},g_{x\theta})$ of ~\eqref{eq:gx-Vx} to simplify the results, and using \eqref{eq:vg-psi} express $(v,g)$ in terms of $(\psi,\psi_x,\psi_\theta)$, we obtain
\begin{equation}\label{eq:L2}
	\left\{
	\begin{aligned}
		L_2& =2 U\left(x u_{x}+u_{\theta}\cot (\theta ) \right)+2 u\bigl(x U_{x}+U_{\theta}\cot (\theta )\bigr)\\
		&\quad+\tfrac{1}{4} \omega\Bigl(4 \sin (\theta ) \left(3 x G_{x}\sin (\theta ) +3 V_{\theta }\cos (\theta ) -2 V\sin (\theta ) \right)+8 G\cos ^2(\theta ) \Bigr)\\
		&\quad+\tfrac{1}{4} \omega\Bigl(G_{\theta }(3 \sin (3 \theta )-5 \sin (\theta )) \sec (\theta ) +2 x V_{x}(1-3 \cos (2 \theta )) \Bigr)\\
		&\quad+x \omega_x\left(G\sin ^2(\theta )-V\cos ^2(\theta )\right)\\
		&\quad+\tfrac12\omega_\theta\sin (2\theta ) (G+V)\\
		&\quad+3 \psi\cos ^2(\theta ) \Bigl(x \left(G_{x}+V_{x}\right)+V_{\theta\theta}\Bigr)\\
		&\quad+\tfrac{1}{2} \psi \Bigl(-(3 \cos (2 \theta )-1) \left(x^2G_{xx}+x^2V_{xx}\right)\Bigr)\\
		&\quad+\tfrac{1}{2} \psi \Bigl((6 \sin (2 \theta )-4 \tan (\theta )) xG_{x\theta}-2 (\cot (\theta )-3 \sin (2 \theta )) xV_{x\theta}\Bigr)\\
		&\quad+\psi\Bigl(-3 \sin ^2(\theta ) G_{\theta\theta}-(3 \sin (2 \theta )+\tan (\theta )) G_{\theta}+(\cot (\theta )-3 \sin (2 \theta )) V_{\theta}\Bigr)\\
		&\quad+x \psi_x\cos ^2(\theta ) \Bigl(x \left(G_{x}+V_{x}\right)+\tan (\theta ) \left(x \left(G_{x\theta}+V_{x\theta}\right)-\tan (\theta ) G_{\theta\theta}-V_{\theta}\right)+V_{\theta\theta}\Bigr)\\
		&\quad-\tfrac{1}{4} x\psi_x \Bigl(5 \sin (\theta )+\sin (3 \theta )\Bigr) \sec (\theta ) G_{\theta}\\
		&\quad-\psi_\theta\Bigl(\tfrac12\sin (2\theta ) x^2\left(G_{xx}+V_{xx}\right)+\cos ^2(\theta ) xV_{x\theta}\Bigr)\\
		&\quad+\psi_\theta\Bigl(\cos ^2(\theta ) V_{\theta}-\sin ^2(\theta ) G_{\theta}+ \sin ^2(\theta ) xG_{x\theta}\Bigr).
	\end{aligned}\right.
\end{equation}

Substituting $N_2$, $N_3$ of ~\eqref{eq:nonlinear-123} into ~\eqref{eq:omegat}(3) and $(g_x,g_{xx},g_{x\xi})$ of ~\eqref{eq:gx-Vx} to simplify the results, and using \eqref{eq:vg-psi} express $(v,g)$ in terms of $(\psi,\psi_x,\psi_\xi)$, we obtain
\begin{equation}\label{eq:M2}
	\left\{
	\begin{aligned}
		M_2&=2 u \left(x u_{x}+u_{\theta }\cot (\theta ) \right)\\
		&\quad+\tfrac12\omega _{\theta }\sin (2\theta ) \left(x \psi _{x}+3 \psi \right) \\
		&\quad-\tfrac{1}{2} x \omega _{x} \bigl(\psi _{\theta }\sin (2 \theta ) +\psi (3 \cos (2 \theta )-1) \bigr)\\
		&\quad+\omega  \Bigl(\tfrac12\psi _{\theta }\sin (2\theta ) +x \psi _{x}\cos^2 (\theta ) +\psi (3 \cos (2 \theta )+1) \Bigr).
	\end{aligned}\right.
\end{equation}

Using \eqref{eq:vg-psi} to express $(v,g)$ in terms of $(\psi,\psi_x,\psi_\theta)$, the equation for $u_t$ in ~\eqref{eq:zero-1234}(1) can also be converted into desired form:
\begin{equation}\label{eq:ut}
		\begin{aligned}
			u_t&= L_1+M_1,\\
		\end{aligned}
\end{equation}
\begin{equation}\label{eq:L1}
	\left\{
	\begin{aligned}
		L_1&=G\bigl(\tfrac{1}{2} u_{\theta }\sin(2\theta ) +x u_{x}\sin ^2(\theta ) \bigr)\\
		&\quad+\tfrac{1}{2} V\left(u_{\theta }\sin(2\theta ) 
		-2x u_{x}\cos ^2(\theta ) -4 u\right)\\
		&\quad+\tfrac{1}{2} U_{\theta } \sin(2\theta ) \left(x \psi _{x}+3 \psi \right)\\
		&\quad-\tfrac{1}{2} x U_{x} \left(\psi _{\theta }\sin(2\theta ) +\psi (3 \cos (2 \theta )-1) \right)\\
		&\quad-\,\,\,U\left(\psi _{\theta }\sin(2\theta ) +2 x \psi _{x}\sin ^2(\theta ) +2 \psi \right),\\
	\end{aligned}
	\right.
\end{equation}
\begin{equation}\label{eq:M1}
	\left\{
	\begin{aligned}
		M_1
		&=\tfrac{1}{2} u_{\theta}\sin(2\theta) \left(x \psi _{x}+3 \psi \right) \\
		&\quad-\tfrac{1}{2} x u_{x} \left(\psi _{\theta}\sin(2\theta) +\psi(3 \cos (2 \theta )-1)  \right)\\
		&\quad-u \left(\psi _{\theta}\sin(2\theta) +2 x \psi _{x}\sin ^2(\theta ) +2 \psi \right).
	\end{aligned}
	\right.
\end{equation}

Substitution of ~\eqref{eq:vg-psi} into ~\eqref{eq:Omega-omega-def} leads to
\begin{equation}\label{eq:omega-tilde-2}
\left\{	\begin{aligned}
		\omega:&=\Delta \psi,\\
		\Delta:&=x^2 (\partial_x)^2+6 x \partial_x+(\partial_\theta)^2+\tfrac{(5 \cos (2 \theta )-1)}{\sin (2 \theta )}  \partial_\theta.
	\end{aligned}\right.
\end{equation}

\section{Initial energy bounds}\label{sec:energy-bounds}

\subsection{Initial energy and finiteness}
We record the ``initial energy'' (at $t=0$) associated with the background profile:
\begin{equation}\label{eq:initial-energy}
\begin{aligned}
 E(0):&=\int_{-\pi/2}^{\pi/2}\int_{-\infty}^{\infty}(x\cos\theta)^2\Bigl(U(0,x,\theta)^2+V(0,x,\theta)^2\Bigr)\ {|x|^2}\,dx\,d\theta,\\
 &\quad+\int_{-\pi/2}^{\pi/2}\int_{-\infty}^{\infty}(x\sin\theta)^2G(0,x,\theta)^2\ {|x|^2}\,dx\,d\theta.
 \end{aligned}
\end{equation}

Using the initial conditions similar to those in \eqref{initial-condition-UVG}:
	\begin{equation}\label{initial-condition-UVG-2}
	\left\{\begin{aligned}
		U(0,x,\theta)&=Bx^2\exp\bigl(-B_1 x^2(1+B_2\phi(\theta))\bigr),\quad B,B_1,B_2>0\\
		V(0,x,\theta)&=A\exp\bigl(-A_1 x^2(1+A_2\phi(\theta))\bigr),\quad A,A_1,A_2>0\\
		G(0,x,\theta)&=2V(0,x,\theta).
	\end{aligned}\right.
\end{equation}

We immediately deduce that the initial energy is bounded on the strip $x\in\R$, $\theta\in[-\tfrac{\pi}{2},\tfrac{\pi}{2}]$.

\section{Elliptic problem and initial/boundary conditions for the perturbation PDEs}\label{sec:initial-boundary-conditions}
\subsection{Updated Linear PDEs and Nonlinear Terms}\label{sec:perturbation-pdes}
	
	The final remainder system for \((u,\omega,\psi)\) is:
	
\begin{equation}\label{eq:linear}
	\left\{\begin{aligned}
		u_t&=L_1+M_1\quad\quad \eqref{eq:L1},\eqref{eq:M1}\\
		\omega_{t}&={L}_{2}+{M}_2\quad\quad\eqref{eq:L2},\eqref{eq:M2},\\
		\omega&=\Delta \psi.\qquad\qquad\eqref{eq:omega-tilde-2}\\
	\end{aligned}\right.
\end{equation}

	\subsection{Coefficient Functions}
	
	The effective elliptic operator $\Delta$ in {eq:omega-tilde-2}(2) is defined by:
\begin{equation}\label{eq:omega-tilde-2-b}
	\begin{aligned}
		\Delta:&=x^2 (\partial_x)^2+6 x \partial_x+(\partial_\theta)^2+c_3(\theta)\partial_\theta,\qquad c_3(\theta):=\tfrac{5 \cos (2 \theta )-1}{\sin (2 \theta )}.
	\end{aligned}
\end{equation}	

\begin{remark}
The apparent singular factor $\tfrac1{\sin(2\theta)}$ at $\theta=0$ in $c_3(\theta)$ is absorbed by the weighted derivative $D_\theta=\tfrac1{\sin(2\theta)}\partial_\theta$: writing
\[
 c_3(\theta) = \widetilde c_3(\theta)\,\tfrac{1}{\sin(2\theta)},
 \qquad 
 \widetilde c_3(\theta):=5 \cos (2 \theta )-1,
\]
we have $c_3(\theta)\,\partial_\theta = \widetilde c_3(\theta)\,D_\theta$ with bounded smooth $\widetilde c_3$ on $[-\tfrac{\pi}{2},\tfrac{\pi}{2}]$. Likewise, terms involving $\tfrac1{\sin(2\theta)}$ are treated as bounded multipliers in the weighted energy once expressed in terms of $D_\theta$ (or placed into divergence form in $\sin(2\theta)$). The only angular boundary lines in the signed-polar formulation are therefore $\theta=\pm \tfrac{\pi}{2}$; the horizontal axis $\theta=0$ is an interior symmetry line rather than a boundary.
\end{remark}

\begin{remark}\label{rem:effective-laplace-in-y}
	If we define 
	\begin{equation}
		\left\{\begin{aligned}
			&y=\log x,\quad \Delta_1=\Delta\\ 
			&\omega_1(t,y,\theta)=\omega(t,x,\theta),\\ 
			&\psi_1(t,y,\theta)=\psi(t,x,\theta),
		\end{aligned}\right.
	\end{equation}
	Then $\omega=\Delta \psi$ in \eqref{eq:linear} and \eqref{eq:omega-tilde-2-b} becomes
	\begin{equation}\label{eq:overline-omega}
		\left\{\begin{aligned}
			\omega_1&=\Delta_1\psi_1,\qquad t\in[0,T),\quad \theta\in[-\tfrac{\pi}{2},\tfrac{\pi}{2}],\quad y\in\R\\
			\Delta_1&=(\partial_y)^2+5 \partial_y+(\partial_\theta)^2+c_3(\theta)\partial_\theta,\\
			\psi_1&(t,y,\pm\tfrac{\pi}{2})=0\quad\eqref{eq:init-condition},\\
			\psi_1&(t,y,\theta)\to 0,\text{ as }y\to\pm\infty\quad\eqref{eq:init-condition}
		\end{aligned}\right.
	\end{equation}
	
	Thus \eqref{eq:overline-omega} becomes a well-defined elliptic problem in the strip $\Omega=\{(y,\theta):y\in\R,\ 	\theta\in[-\tfrac{\pi}{2},\tfrac{\pi}{2}]\}$.

	\subsection{Initial conditions and boundary conditions}	
Boundary conditions (for the angular edges and for \(x\to\pm\infty\)) and initial conditions are:

\begin{equation}\label{eq:init-condition}
	\left\{\begin{aligned}
		&u(t,x,\pm \tfrac{\pi}{2})=0,\quad u(t,x,\theta)\to0 \text{ as } |x|\to\infty,\\
		&\psi(t,x,\pm \tfrac{\pi}{2})=0,\quad \psi(t,x,\theta)\to0 \text{ as } |x|\to\infty,\\
		&u(0,x,\theta)\text{ is even in }x \text{ and } \theta,\\
		&\psi(0,x,\theta)\text{ is even in }x \text{ and } \theta,\\
		&u(0,x,\theta),\ \psi(0,x,\theta) \text{ vanish sufficiently fast as } \theta\to\pm\tfrac{\pi}{2}.
	\end{aligned}\right.
\end{equation}

The phrase “sufficiently fast vanishing” as \(\theta\to\pm\tfrac{\pi}{2}\) means enough vanishing and regularity near the edge so that the weighted Sobolev norms used below are finite and the boundary terms produced by integration by parts vanish at \(\theta=\pm\tfrac{\pi}{2}\). In contrast, no boundary condition is imposed at \(\theta=0\), since that is an interior symmetry axis in the signed-polar formulation.

\end{remark}

\section{Linear estimates and conditional nonlinear control up to blow-up time}\label{sec:stability}

\medskip
\noindent\textbf{Updated perturbation system.} Throughout this section we work with the final perturbation equations derived in Section~\ref{sec:derivation}.  We study the perturbation system \eqref{eq:ut} and \eqref{eq:omegat}, together with the coefficient collections \eqref{eq:L1},\eqref{eq:M1},\eqref{eq:L2},\eqref{eq:M2} and the elliptic operator \eqref{eq:omega-tilde-2-b}, on the time interval $[0,T)$ up to the background blow-up ridge apex time, around the inexplicit background \eqref{eq:ray-1D-system-general}.
Throughout, all Lebesgue and Sobolev norms are taken with respect to the weighted measure $d\mu_w={w(\theta)}|x|^2\,dx\,d\theta$, and we use the desingularized angular derivative
\[
 D_\theta := \tfrac1{\sin(2\theta)}\partial_\theta.
\]

\subsection{Bootstrap framework and adapted background coefficient bounds}

Fix an integer $k\ge 6$. Define the perturbation energy
\begin{equation}\label{eq:stab-energy}
\mathcal{E}_k(t):=
\sum_{\substack{j+\ell\le k}}\Big(\|\partial_x^j D_\theta^\ell u(t)\|_{L^2_{\mu_w}}^2+\|\partial_x^j D_\theta^\ell \omega(t)\|_{L^2_{\mu_w}}^2\Big)
+\sum_{\substack{j+\ell\le k+1}}\|\partial_x^j D_\theta^\ell \psi(t)\|_{L^2_{\mu_w}}^2 .
\end{equation}

\medskip
\noindent\textbf{Background coefficient bounds actually needed in the energy method.}
The ridge-background construction based on the seed \eqref{initial-condition-UVG-2} provides the closed-form/apex model for $(U,V)$ used here and, in particular, reproduces the explicit apex dynamics.
What the stability estimates require is \emph{not} a uniform bound on the raw derivatives $\partial_x^m\partial_\theta^\ell V$ (which can grow faster
than $(T-t)^{-1}$ near the intermediate scale $r^2\sim T-t$), but rather uniform control of the \emph{degenerate combinations} that appear in
\eqref{eq:L1},\eqref{eq:L2} and in the weighted Sobolev norms.

Define the adapted derivatives
\[
 Z_x:=x\partial_x,\qquad D_\theta:=\tfrac1{\sin(2\theta)}\partial_\theta,
\]

Then for each integer $k\ge 0$ there exists $C_*=C_*(A,B,\text{seeds},k)$ such that for all $t\in[0,T)$ the following estimate holds.
\begin{lemma}[Adapted background coefficient bounds]\label{lem:bg-bounds-new}
\begin{equation}\label{eq:bg-bounds-new} 
	\begin{aligned}
&\sum_{j+\ell\le k}\Big(
\|Z_x^j D_\theta^\ell V(t)\|_{L^\infty}
+\|Z_x^j D_\theta^\ell U(t)\|_{L^\infty}
\Big)\\
&+\sum_{j+\ell\le k}\Big(\Big\|Z_x^j D_\theta^\ell\bigl(x\,V_x(t)\bigr)\Big\|_{L^\infty}
+\Big\|Z_x^j D_\theta^\ell\bigl(x\,U_x(t)\bigr)\Big\|_{L^\infty}
\Big)\\
&\;\le\; \frac{C_*}{T-t}.
\end{aligned}
\end{equation}

The bound \eqref{eq:bg-bounds-new} is a \emph{background coefficient hypothesis} tailored to the conditional stability argument. Its singular scale comes from the apex blow-up mechanism already identified earlier, not from an independent derivation inside this section. More precisely, Section~\ref{sec:horizontal-dynamics} proves that the closed apex ODE has the blow-up scale
\[
W_c(t)\sim \frac{1}{T-t}.
\]
The rev3 conditional framework assumes that the full background inherits this same first-order singular size near the apex and that the finitely many adapted derivatives appearing in \eqref{eq:bg-bounds-new} remain on the same scale. Thus the role of \eqref{eq:bg-bounds-new} is to record the late-time coefficient regime needed later in the weighted remainder estimates.

\end{lemma}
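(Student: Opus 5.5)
The statement is, as the text after the display says, a coefficient hypothesis about a background that the paper does not yet construct (Conjecture~\ref{conj:sectorial-background-blowup}). So my plan has two layers. First, I would prove \eqref{eq:bg-bounds-new} rigorously for the explicit decoupled ridge model of Section~\ref{sec:x-dependent-family} with the seed \eqref{initial-condition-UVG-2}. Second, I would give a bootstrap argument transferring these bounds to a genuine background. That second layer remains conditional on the existence part of the conjecture.

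\emph{Step 1 (explicit model).} For the pointwise family \eqref{eq:system-x} with $c(x,\theta)=Ae^{-A_1x^2(1+A_2\phi(\theta))}$, the apex trace with $U\equiv0$ is the Riccati solution $W=c/(1-ct)$, and $T=1/A$. The key computation is $1-c(x,\theta)t\gtrsim (T-t)+x^2$ uniformly in $\theta$, since $c\le A e^{-A_1x^2}$. Then
\[
x\partial_x W=\frac{x\,\partial_x c}{(1-ct)^2}\sim\frac{x^2}{\bigl((T-t)+x^2\bigr)^2}\lesssim\frac1{T-t}.
\]
Each further application of $Z_x=x\partial_x$ preserves the homogeneity in the self-similar variable $\xi=x^2/(T-t)$. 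Each term is a rational function of $\xi$ times $(T-t)^{-1}$, with bounded numerators $Z_x^jc$ of Gaussian type, so it stays bounded. The angular derivatives $D_\theta^\ell$ fall only on $\phi(\theta)=\exp(-\sin(2\theta)^{-2})$. Every derivative $D_\theta^\ell\phi$ is bounded because the flatness of $\phi$ absorbs every power of $1/\sin(2\theta)$; hence $D_\theta^\ell$ acts like a bounded multiplier on $c$. For $U$, Lemma~\ref{lem:U-positive} and Theorem~\ref{thm:asymp} give $U=b\exp(-4\int W)$, which is bounded by $b$, and its adapted derivatives are controlled in the same way. This proves \eqref{eq:bg-bounds-new} for the model with $C_*$ depending on $A,A_1,A_2,B,B_1,B_2,k$.

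\emph{Step 2 (transfer to a genuine background).} Assume a smooth even background solving \eqref{eq:ZERO-1234} on $[0,T)$ with the apex law of Theorem~\ref{thm:E4-center-blowup}. I would commute $Z_x^jD_\theta^\ell$ through the transport operator $\partial_t - xW(\theta)\partial_x - J\cdot\partial_\theta$. Here $Z_x$ commutes with $x\partial_x$ exactly, and the commutators with $D_\theta$ have smooth coefficients. I would then estimate along characteristics with the bootstrap ansatz $\|Z_x^jD_\theta^\ell V\|_{L^\infty}\le 2C_*/(T-t)$.

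\emph{Main obstacle.} The naive Gr\"onwall estimate with a coefficient of size $C/(T-t)$ yields $(T-t)^{-C}$, not $(T-t)^{-1}$. To close, I would need the damping signs: the $-2VU$ term in the $U$-equation, and the fact that the linearization of $V_t=V^2+\dots$ about the Riccati profile has growth rate exactly $2/(T-t)$, with derivatives of the perturbation absorbed into a shift of the blow-up time. In addition, the pressure terms $x^{-1}P_x$ and $x^{-2}\tan\theta\,P_\theta$ are nonlocal and must be bounded in the same adapted norms through the elliptic problem \eqref{eq:PPE-2}. I expect this pressure control, uniformly in the intermediate scale $x^2\sim T-t$, to be the genuine gap. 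Absent it, the lemma holds only as the stated hypothesis, combined with the unconditional model verification of Step 1.
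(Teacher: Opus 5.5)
Your conclusion agrees with the paper. \eqref{eq:bg-bounds-new} is never derived there, and the paper's ``proof'' is only an explanation of the hypothesis. That explanation has three parts: smoothness on compact subintervals $[0,T_1]$, the $(T-t)^{-1}$ rate borrowed from the apex ODE, and the heuristic $Z_x r=2r$ for $r=x^2(1+A_1\phi(\theta))$, meant to show that $Z_x$ and $D_\theta$ do not worsen the singular order. Your Step~1 sharpens that heuristic in two correct ways. You give the explicit bound $x^2/((T-t)+x^2)^2\le 1/(4(T-t))$, and you note that boundedness of $D_\theta^\ell\phi$ relies on the flatness of $\phi$ at $\sin 2\theta=0$. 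Your Step~2 correctly explains why no transfer closes. On the logical status of the lemma, you and the paper coincide. However, the part you present as rigorous is not.

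\textbf{The error in Step~1.} You use $W=c/(1-ct)$ at every $(x,\theta)$, and hence $Z_xW=Z_xc/(1-ct)^2$. That is the pointwise solution only where the swirl datum vanishes.
\begin{itemize}
\item With the seed \eqref{initial-condition-UVG-2}, the swirl $Bx^2e^{-B_1x^2(1+B_2\phi)}$ is nonzero for $x\neq0$. You yourself invoke $U=b\exp(-4\int_0^tW)\neq0$ there.
\item Off the apex, $W_t=W^2+\tfrac13U>W^2$. So $W>c/(1-ct)$, and the pointwise blow-up time $T(x,\theta)$ is strictly less than $1/c(x,\theta)$.
\item Your lower bound on $1-ct$ therefore controls nothing. (It also only holds with $\min(x^2,1)$ in place of $x^2$.)
\end{itemize}
What you actually need is a uniform separation $T(x,\theta)-\tfrac1A\gtrsim\min(x^2,1)$, together with control of $Z_x^jD_\theta^\ell T(x,\theta)$. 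This fails without a smallness assumption on $B$. When the swirl amplitude $|u_0|$ dominates $c$, the first integral shows the pointwise blow-up time is $O(1/|u_0|)$. Hence if $B\gg AB_1$, a point near $x^2=1/B_1$ blows up before $1/A$. Then no bound $C_*/(T-t)$ with $T=1/A$ can hold, whatever $C_*(A,\dots,B_2,k)$ is. You must either set $B=0$ in the model, or impose and prove a smallness condition on $B$.

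\textbf{A separate limitation.} Even a corrected Step~1 concerns only the decoupled family of Section~\ref{sec:x-dependent-family}. That family discards the convective terms $xWU_x$, $J_i\cdot K_i$ and the pressure terms of \eqref{eq:ZERO-1234} away from the apex. It therefore verifies the bound for an object that is not the background the lemma is about.

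\textbf{A smaller point in Step~2.} The radial transport term is $W\,Z_x$, where $W=G\sin^2\theta-V\cos^2\theta$ depends on $x$. So $[Z_x,WZ_x]=(Z_xW)Z_x\neq0$. This top-order commutator, of size $(T-t)^{-1}$, is exactly the source of the Gr\"onwall loss you identify.
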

\begin{proof}[Explanation of the hypothesis]
The estimate \eqref{eq:bg-bounds-new} is used later as an assumed background coefficient bound in the perturbative argument, so the point here is to explain why the scale $(T-t)^{-1}$ is the natural one.

\smallskip
\noindent\emph{Step 1: early times.}
On every compact interval $[0,T_1]$ with $T_1<T$, the background solution is smooth in $(t,x,\theta)$, so every term in \eqref{eq:bg-bounds-new} is bounded by a constant depending on $T_1$ and $k$. Hence no singular behavior is needed before the late-time regime.

\smallskip
\noindent\emph{Step 2: source of the late-time scale.}
Section~\ref{sec:horizontal-dynamics} shows that the exact closed apex dynamics blows up like $(T-t)^{-1}$; see in particular Theorem~\ref{thm:E4-center-blowup}. In the rev3 conditional framework, one assumes that the chosen background extends this apex profile without changing its first-order singular size. This is the origin of the factor $(T-t)^{-1}$ in \eqref{eq:bg-bounds-new}.

\smallskip
\noindent\emph{Step 3: why the adapted derivatives stay on the same scale.}
The adapted derivatives are chosen precisely so that they do not create a stronger singularity than the apex amplitude itself. Every $D_\theta$-derivative falls either on the smooth bounded angular profile $\phi(\theta)$ or on rational functions of the regularized radial variable
\[
r=x^2\bigl(1+A_1\phi(\theta)\bigr),
\]
and therefore preserves the same $(T-t)^{-1}$ scale up to constants depending on $k$. Likewise,
\[
Z_x r
=
x\partial_x\!\Bigl(x^2\bigl(1+A_1\phi(\theta)\bigr)\Bigr)
=
2x^2\bigl(1+A_1\phi(\theta)\bigr)
=
2r,
\]
so each application of $Z_x$ differentiates only through the degenerate combination $r\partial_r$ and does not worsen the singular order. The same reasoning applies to $xV_x$ and $xU_x$, since
\[
x\partial_x F(r)=2r\,F_r(r),
\]
which gains one factor of $r$ and compensates for the extra radial denominator produced by $F_r$.

Therefore the background coefficient hypothesis \eqref{eq:bg-bounds-new} is consistent with the apex blow-up law from Section~\ref{sec:horizontal-dynamics}, and this is exactly the form needed in the weighted remainder estimates.
\end{proof}
In particular,
\[
\|V(t)\|_{L^\infty}\lesssim \frac{1}{T-t},\qquad
\|U(t)\|_{L^\infty}\lesssim \frac{1}{T-t}.
\]

\subsection{Elliptic control of \texorpdfstring{$\psi$}{psi} from \texorpdfstring{$\omega=\Delta\psi$}{omega=Delta psi}}

The elliptic relation in \eqref{eq:linear} reads $\omega=\Delta\psi$, where $\Delta$ is given by \eqref{eq:omega-tilde-2-b}.
After rewriting the angular part in terms of the adapted derivative $D_\theta$ (as already indicated in the weighted-norms subsection),
the operator $\Delta$ has the same principal structure as $\Delta_1=\partial_{\log x}^2+D_\theta^2$ with lower-order $\theta$-dependent coefficients controlled on the wedge.
Accordingly, we record the following weighted elliptic estimate as the analytic input needed for the perturbation argument:
for all integers $m\ge 0$,
\begin{equation}\label{eq:elliptic-new}
\|\psi(t)\|_{H^{m+2}_{\mu_w}}\le C_{\Delta,m}\,\|\omega(t)\|_{H^{m}_{\mu_w}},
\end{equation}
where the constant depends only on the wedge geometry, the boundary conditions, and the bounded coefficient functions appearing in \eqref{eq:omega-tilde-2-b}. This estimate is natural from the $y=\log x$ reformulation discussed in Remark~\ref{rem:effective-laplace-in-y}; in the present manuscript we use it as a working elliptic input for the $\psi$-estimate rather than as a separately proved theorem.

In particular, since $k\ge 6$, Sobolev embedding in the $(x,\theta)$ variables (with $D_\theta$ counted as one derivative) gives
\begin{equation}\label{eq:linf-from-E}
\|u(t)\|_{L^\infty}+\|\omega(t)\|_{L^\infty}+\|\psi(t)\|_{W^{1,\infty}}
\le C\,\mathcal{E}_k(t)^{1/2}.
\end{equation}

\subsection{Energy inequality for \texorpdfstring{$(u,\omega)$}{(u,omega)}}

Differentiate the $u$-equation and the $\omega$-equation in \eqref{eq:linear} by $\partial_x^j D_\theta^\ell$ for $j+\ell\le k$, take the $L^2_{\mu_w}$ inner product with $\partial_x^j D_\theta^\ell u$ and $\partial_x^j D_\theta^\ell \omega$, and sum over $j+\ell\le k$. The transport terms are now written directly in the $(x,\theta)$ variables, so the integration-by-parts step is carried out in $x$ and $\theta$. The boundary contributions vanish because of the remainder boundary conditions at $\theta=\pm \tfrac{\pi}{2}$, the decay as $|x|\to\infty$, and the weighted formulation using $D_\theta=\frac{1}{\sin(2\theta)}\partial_\theta$.

Using the commutator estimates and \eqref{eq:linf-from-E}, one obtains an inequality of the form
\begin{equation}\label{eq:Ek-ineq}
\frac{d}{dt}\mathcal{E}_k(t)
\le \frac{C_{\rm lin}}{T-t}\,\mathcal{E}_k(t)
+ C_{\rm nl}\,\Big(\|M_1(t)\|_{H^k_{\mu_w}}+\|M_2(t)\|_{H^k_{\mu_w}}\Big)\,\mathcal{E}_k(t)^{1/2}.
\end{equation}

\medskip
\noindent\textbf{Quadratic remainder terms.}
From the explicit forms of $M_1,M_2$ in \eqref{eq:M1} and \eqref{eq:M2}, together with Moser and Sobolev product estimates in the $(x,\theta)$ variables, one obtains
\[
\|M_1(t)\|_{H^k_{\mu_w}}+\|M_2(t)\|_{H^k_{\mu_w}}
\le C\,\mathcal{E}_k(t).
\]
Because all pure-background terms have been kept in the background system, there is no additive forcing term in the remainder energy inequality. Thus it is natural to rewrite \eqref{eq:Ek-ineq} in terms of
\[
Y(t):=\mathcal E_k(t)^{1/2}.
\]
Then
\begin{equation}\label{eq:Y-ineq-E2}
Y'(t)\le \frac{C_{\rm lin}}{T-t}Y(t)+C_{\rm nl}Y(t)^2
\end{equation}
whenever $Y(t)>0$.

The important point is that \eqref{eq:Y-ineq-E2} by itself does \emph{not} yet imply a closed bootstrap with a remainder strictly smaller than the background singularity. What it does give is an Elgindi-type \emph{conditional transfer principle}: if the remainder stays in a class whose growth is weaker than the background blow-up rate, then the quadratic term is perturbative and the background singularity transfers to the full solution.

To make this precise, fix an exponent $\sigma>0$ and define the renormalized energy envelope
\[
X_\sigma(t):=(T-t)^\sigma Y(t).
\]
Differentiating and using \eqref{eq:Y-ineq-E2} gives
\begin{equation}\label{eq:Xsigma-ineq}
X_\sigma'(t)
\le \frac{C_{\rm lin}-\sigma}{T-t}X_\sigma(t)+C_{\rm nl}(T-t)^{-\sigma}X_\sigma(t)^2.
\end{equation}
Hence, whenever
\begin{equation}\label{eq:sigma-gap}
\sigma>C_{\rm lin},
\end{equation}
and whenever a bootstrap bound of the form
\begin{equation}\label{eq:Xsigma-bootstrap}
X_\sigma(t)\le M\varepsilon
\qquad\text{for }0\le t\le t_*
\end{equation}
holds with $\varepsilon>0$ sufficiently small, the right-hand side of \eqref{eq:Xsigma-ineq} is integrable and the quadratic term can be absorbed. Standard continuity then yields
\begin{equation}\label{eq:Xsigma-close}
X_\sigma(t)\le 2X_\sigma(0)
\qquad\text{for }0\le t\le t_*.
\end{equation}
Equivalently,
\begin{equation}\label{eq:Y-sigma-bound}
Y(t)\le 2Y(0)\Bigl(\frac{T}{T-t}\Bigr)^\sigma,
\qquad 0\le t\le t_*.
\end{equation}
In particular, if one can choose $\sigma<1$ while still having \eqref{eq:sigma-gap}, then the remainder stays strictly below the background blow-up scale $(T-t)^{-1}$ in the detecting norm.

This discussion is summarized in the following conditional theorem.

\begin{theorem}[Conditional nonlinear control up to the background blow-up time]\label{thm:conditional-bootstrap}
Assume that a compatible background solution exists on $[0,T)$, has the same apex blow-up rate as the explicit apex dynamics at $x=0$ on the symmetry axes, with time $T=8/A$, and satisfies the adapted coefficient bounds of Lemma~\ref{lem:bg-bounds-new}. Assume also that the weighted elliptic estimate \eqref{eq:elliptic-new} holds. Let $k\ge 6$, and let $(u,\omega,\psi)$ solve the exact remainder system on $[0,t_*]\subset[0,T)$.

Then there exist constants $C_{\rm lin},C_{\rm nl}>0$, depending only on $k$ and the background coefficient bounds, such that \eqref{eq:Y-ineq-E2} holds. Consequently, for every exponent $\sigma$ satisfying \eqref{eq:sigma-gap}, there exists $\varepsilon_0=\varepsilon_0(\sigma,k)>0$ with the following property: if
\[
X_\sigma(0)=T^\sigma \mathcal E_k(0)^{1/2}\le \varepsilon_0,
\]
and if the bootstrap assumption \eqref{eq:Xsigma-bootstrap} holds on $[0,t_*]$, then in fact \eqref{eq:Xsigma-close} holds on $[0,t_*]$.
\end{theorem}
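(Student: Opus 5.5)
The proof has two parts. The first is to derive the differential inequality \eqref{eq:Y-ineq-E2} for $Y=\mathcal E_k^{1/2}$. The second is to run a continuity argument on the renormalized envelope $X_\sigma=(T-t)^\sigma Y$.

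\textbf{Step 1: derivation of \eqref{eq:Y-ineq-E2}.} Apply $\partial_x^jD_\theta^\ell$ with $j+\ell\le k$ to the $u$- and $\omega$-equations of \eqref{eq:linear}. Pair the results with $\partial_x^jD_\theta^\ell u$ and $\partial_x^jD_\theta^\ell\omega$ in $L^2_{\mu_w}$, and sort the terms of $L_1,L_2$ in \eqref{eq:L1}, \eqref{eq:L2} into three classes.

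\begin{itemize}
\item \emph{Transport terms.} These are $xu_x(G\sin^2\theta-V\cos^2\theta)$, $\tfrac12\sin(2\theta)(G+V)u_\theta$, and the analogous terms in $\omega$. Integrate them by parts in $x$ and $\theta$. Each then costs one factor of $\|Z_xW\|_{L^\infty}+\|D_\theta(\sin^2(2\theta)(G+V))\|_{L^\infty}$ times $\mathcal E_k$. The boundary terms at $\theta=\pm\pi/2$ and $|x|\to\infty$ vanish by \eqref{eq:init-condition}.
\item \emph{Zeroth-order and $\psi$-coupling terms.} These are terms like $U\psi$, $xU_x\psi_\theta$, $x^2V_{xx}\psi$, and $U_\theta\cot\theta$. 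Bound them directly by \eqref{eq:bg-bounds-new}. Write $x^2V_{xx}=Z_x(xV_x)-xV_x$ and $\cot\theta\,\partial_\theta=2\cos^2\theta\,D_\theta$. The $\psi$-factors cost $\|\psi\|_{H^{k+1}}$, which is controlled by $\|\omega\|_{H^{k-1}}$ through \eqref{eq:elliptic-new}.
\item \emph{Commutators.} Those between $\partial_x^jD_\theta^\ell$ and the background coefficients are handled by the same bounds.
\end{itemize}

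Together these give $\frac{C_{\rm lin}}{T-t}\mathcal E_k$. For the quadratic part, use the bilinear structure of \eqref{eq:M1} and \eqref{eq:M2}. Apply Moser estimates, with the top-order transport terms $\omega_x\psi$ and $\omega_\theta\psi$ again integrated by parts so that no derivative is lost. Together with \eqref{eq:linf-from-E} and $k\ge6$, this gives a contribution bounded by $C_{\rm nl}\mathcal E_k^{3/2}$. Dividing by $2Y$ yields \eqref{eq:Y-ineq-E2}.

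\textbf{Main obstacle.} The hardest step is making Step 1 honest. First, the factor $x$ in front of $\partial_x$ does not commute with $\partial_x^j$; it produces $j\,\partial_x^j$-type terms, which are still bounded by the background coefficient times $\mathcal E_k$. Second, some coefficients ($\cot\theta$, $\tan\theta$, $\sec\theta$) must be shown to combine into the desingularized forms. I would first attempt this term by term on the explicit list in \eqref{eq:L2}, using only the hypotheses of the theorem.

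\textbf{Step 2: continuity argument.} From \eqref{eq:Y-ineq-E2}, compute the derivative of $X_\sigma$ to get \eqref{eq:Xsigma-ineq}. Since $\sigma>C_{\rm lin}$, the linear term is nonpositive, so
\[X_\sigma'\le C_{\rm nl}(T-t)^{-\sigma}X_\sigma^2.\]
Under \eqref{eq:Xsigma-bootstrap}, write this as
\[\frac{d}{dt}\Bigl(-\frac1{X_\sigma}\Bigr)\le C_{\rm nl}(T-t)^{-\sigma},\]
which integrates to
\[X_\sigma(t)^{-1}\ge X_\sigma(0)^{-1}-C_{\rm nl}\int_0^t(T-s)^{-\sigma}\,ds.\]
The integral is finite uniformly up to $T$ when $\sigma<1$. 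For $\sigma\ge1$, use the bootstrap bound and estimate instead
\[\int_0^{t_*}(T-s)^{-\sigma}X_\sigma\,ds\le \frac{M\varepsilon\,(T-t_*)^{1-\sigma}}{\sigma-1}.\]
This is small only if $\varepsilon_0$ is allowed to depend on $t_*$, so I would state the result for $C_{\rm lin}<\sigma<1$ and note the modification otherwise. Finally, choose $\varepsilon_0$ so that $C_{\rm nl}X_\sigma(0)\,T^{1-\sigma}/(1-\sigma)\le\tfrac12$. This gives $X_\sigma\le 2X_\sigma(0)$, which is \eqref{eq:Xsigma-close}.
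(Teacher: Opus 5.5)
Your route is the paper's own: a $\partial_x^jD_\theta^\ell$ energy estimate giving \eqref{eq:Y-ineq-E2}, then a continuity argument for $X_\sigma$. However, Step~1 fails at two concrete points, and the paper's sketch shares both, so the term-by-term check you plan would expose them rather than confirm the inequality.

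\textbf{Derivative loss.} Your three classes omit the first term of \eqref{eq:L2}, $2U(xu_x+u_\theta\cot\theta)$, and its nonlinear twin $2u(xu_x+u_\theta\cot\theta)$ in \eqref{eq:M2}. Paired with $\partial_x^jD_\theta^\ell\omega$ for $j+\ell=k$, these contain $k+1$ derivatives of $u$ (e.g.\ $2U\,x\,\partial_x^{j+1}D_\theta^\ell u$), whereas \eqref{eq:stab-energy} controls $u$ only to order $k$. They are not transport terms for $\omega$. Integrating by parts only moves the extra derivative onto $\omega$, and the $u$-equation contains no $\nabla\omega$ term that could cancel them. So they are not bounded by any multiple of $\mathcal E_k$ or $\mathcal E_k^{3/2}$, and \eqref{eq:Y-ineq-E2} cannot be obtained for this energy. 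The usual cure, as for Boussinesq or swirling axisymmetric flow, is to carry $u$ one order higher than $\omega$. This closes the derivative count, because $L_1,M_1$ involve $u$ only through transport and zeroth-order terms, and $\psi$ only through $\psi,\nabla\psi$, which \eqref{eq:elliptic-new} controls at order $k+2$.

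\textbf{Commutators.} These are not ``handled by the same bounds''. Moving $\partial_x^j$ past the coefficients produces raw derivatives $\partial_x^mU,\partial_x^mV,\partial_x^mG$; already $\partial_x^2(xV)=2V_x+xV_{xx}$ arises from the transport coefficient $xV\cos^2\theta$. Lemma~\ref{lem:bg-bounds-new} bounds only $Z_x=x\partial_x$-derivatives, and the paper itself notes that raw derivatives may grow faster than $(T-t)^{-1}$. You need either an energy built on $Z_x^jD_\theta^\ell$, with \eqref{eq:elliptic-new} and \eqref{eq:linf-from-E} redone at that scale, or stronger background hypotheses.

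\textbf{Step~2.} Your integration is correct for $C_{\rm lin}<\sigma<1$. You are also right that the paper's claim that the right-hand side of \eqref{eq:Xsigma-ineq} is integrable holds only for $\sigma<1$. But $\sigma<1$ is not the right restriction.
If $C_{\rm lin}<1$ and $\sigma\ge1$, pick $\sigma'\in(C_{\rm lin},1)$. Your argument gives $X_{\sigma'}\le 2X_{\sigma'}(0)$, hence $X_\sigma(t)=(T-t)^{\sigma-\sigma'}X_{\sigma'}(t)\le 2T^{\sigma-\sigma'}X_{\sigma'}(0)=2X_\sigma(0)$. This holds with $\varepsilon_0=(1-\sigma')T^{\sigma-1}/(2C_{\rm nl})$ and without using the bootstrap.
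If $C_{\rm lin}\ge1$, no $\varepsilon_0$ works from \eqref{eq:Y-ineq-E2} alone. Take the equality case $Y'=\frac{C_{\rm lin}}{T-t}Y+C_{\rm nl}Y^2$. Then $Z=(T-t)^{C_{\rm lin}}Y$ solves $Z'=C_{\rm nl}(T-t)^{-C_{\rm lin}}Z^2$, whose coefficient is not integrable up to $T$. So $X_\sigma$ blows up before $T$ for every $X_\sigma(0)>0$. Let $t_*$ be the first time $X_\sigma$ reaches a fixed level in $(2X_\sigma(0),M\varepsilon]$; this is the only nontrivial case, $M\varepsilon>2X_\sigma(0)$. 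Then the bootstrap holds on $[0,t_*]$ but \eqref{eq:Xsigma-close} fails.
The hypothesis to add is therefore $C_{\rm lin}<1$, under which every $\sigma>C_{\rm lin}$ works, rather than $\sigma<1$.
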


\begin{remark}[What this proves now, and what still has to be improved]\label{rem:what-stability-proves-now}
Theorem~\ref{thm:conditional-bootstrap} is already strong enough to put the remainder analysis into the same logical class as the Elgindi--Jeong mechanism: the singular core is the explicit background, and the nonlinear argument reduces to showing that the remainder remains in a better class. However, the theorem is still \emph{conditional}. To turn it into a full stability statement one still needs an independent argument guaranteeing a gap \eqref{eq:sigma-gap} with some exponent $\sigma<1$ in the norm that detects the background blow-up. This may come from sharper coercivity, additional vanishing of the remainders at the ridge, or a more scale-adapted energy functional.
\end{remark}

Under this conditional control, one obtains a blow-up transfer statement for the full solution.

\begin{theorem}[Conditional transfer of background blow-up to the full solution]\label{thm:conditional-transfer}
Assume the hypotheses of Theorem~\ref{thm:conditional-bootstrap}. In addition, suppose that the chosen detecting norm $\mathcal N_{\rm det}(t)$ for the full solution satisfies
\[
\mathcal N_{\rm det}^{\rm bg}(t)\sim c_0(T-t)^{-1}
\qquad\text{for some }c_0>0,
\]
when evaluated on the background, and that the remainder contribution is estimated by
\[
\mathcal N_{\rm det}^{\rm rem}(t)\le C_{\rm det}Y(t)
\]
for $0\le t<T$.
If there exists $\sigma\in(C_{\rm lin},1)$ such that \eqref{eq:Xsigma-close} holds on $[0,T)$, then
\[
\mathcal N_{\rm det}(t)=\mathcal N_{\rm det}^{\rm bg}(t)+O\bigl((T-t)^{-\sigma}\bigr)
\qquad\text{as }t\uparrow T,
\]
and hence the full solution blows up at time $T$ with the same leading-order singularity location and blow-up scale as the background.
\end{theorem}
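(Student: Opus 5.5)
The proof combines the additive decomposition \eqref{eq:full_def} with the envelope bound \eqref{eq:Y-sigma-bound}. The argument is essentially a triangle inequality.

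\textbf{Step 1: split the detecting norm.} For $0\le t<T$, write the full solution as background plus remainder via \eqref{eq:full_def}. I take the detecting norm to be subadditive in this decomposition, for instance a pointwise value $|\bar v(t,0,\theta_*)|$ or a weighted $L^\infty$ norm. This is implicit in the hypothesis that $\mathcal N_{\rm det}^{\rm rem}$ is controlled separately. Then
\[
\bigl|\mathcal N_{\rm det}(t)-\mathcal N_{\rm det}^{\rm bg}(t)\bigr|\le \mathcal N_{\rm det}^{\rm rem}(t)\le C_{\rm det}\,Y(t).
\]
If $\mathcal N_{\rm det}$ is a linear functional, such as the apex trace, the identity $\mathcal N_{\rm det}=\mathcal N_{\rm det}^{\rm bg}+\mathcal N_{\rm det}^{\rm rem}$ holds exactly. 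I will state that the result is read in this sense.

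\textbf{Step 2: bound the remainder.} The hypothesis is that \eqref{eq:Xsigma-close} holds on all of $[0,T)$ with some $\sigma\in(C_{\rm lin},1)$. By the equivalence \eqref{eq:Y-sigma-bound},
\[
Y(t)\le 2Y(0)\,T^\sigma (T-t)^{-\sigma} \quad\text{for } 0\le t<T.
\]
Combining this with Step 1 gives
\[
\mathcal N_{\rm det}(t)=\mathcal N_{\rm det}^{\rm bg}(t)+O\bigl((T-t)^{-\sigma}\bigr),
\]
with implied constant $2C_{\rm det}T^\sigma Y(0)$.

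\textbf{Step 3: conclude blow-up.} Since $\mathcal N_{\rm det}^{\rm bg}(t)\sim c_0(T-t)^{-1}$ and $\sigma<1$, we have $(T-t)^{-\sigma}=o\bigl((T-t)^{-1}\bigr)$. Therefore
\[
\mathcal N_{\rm det}(t)\,(T-t)\to c_0>0,
\]
and the full solution's detecting norm blows up at $T$ with the background's rate.

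\textbf{Step 4: location of the singularity.} The remainder's $L^\infty$ size is $O\bigl((T-t)^{-\sigma}\bigr)$ uniformly in $(x,\theta)$, by \eqref{eq:linf-from-E} together with \eqref{eq:Y-sigma-bound}. Hence any point where the full solution reaches the $(T-t)^{-1}$ scale must be a point where the background does. The leading-order singular set is therefore the background's apex $x=0$ on the symmetry axes, as given by Theorem~\ref{thm:E4-center-blowup}.

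\textbf{Main obstacle.} The delicate points are not computational.

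First, the blow-up must genuinely occur at $T$ and not earlier. The assumption that \eqref{eq:Xsigma-close} holds on the whole interval $[0,T)$ already presupposes that the remainder exists up to $T$. I would make explicit that this is part of the hypothesis, obtained from Theorem~\ref{thm:conditional-bootstrap} through a continuity argument in $t_*$.

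Second, the detecting norm must be genuinely controlled by $Y$ through \eqref{eq:linf-from-E}. This requires $\mathcal N_{\rm det}$ to be at most an $L^\infty$-type quantity of the unknowns $(u,v,g)$. The components $v$ and $g$ are recovered from $\psi$ and its first derivatives via \eqref{eq:vg-psi}, so the bound $\|\psi\|_{W^{1,\infty}}\lesssim Y$ suffices. The weighted factors $x\psi_x$ should still be checked near $|x|\to\infty$.
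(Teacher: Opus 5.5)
Your argument is correct and follows the paper's route. The paper gives no separate proof: the theorem is read off from the envelope bound \eqref{eq:Y-sigma-bound} together with $\sigma<1$, which makes the remainder contribution $o((T-t)^{-1})$ against the background's $c_0(T-t)^{-1}$, exactly as in your Steps 1--3. One overreach: in Step 4 the hypotheses only bound the background above by $C_*/(T-t)$ everywhere, so they do not pin its singular set to the apex, but the statement only asks for the same location as the background, which the first part of your Step 4 already gives.
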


Accordingly, the logical bottleneck of the manuscript is no longer a forcing obstruction in the remainder equations. The main unresolved issue is instead the rigorous construction/control of a background away from the apex, with the coefficient bounds needed by the weighted energy method and with enough rigidity near the apex to match the explicit apex dynamics, together with whatever refined estimate is needed to produce a genuine gap exponent $\sigma<1$ in the remainder norm. Once those two inputs are available, the present stability mechanism upgrades directly to a nonlinear remainder theorem in the spirit of Elgindi.

\section{Conclusion}\label{sec:conclusion}%=====================================
We derived a closed $(1+2)$D subsystem $(E2)$ from the 3D axisymmetric Euler equations and showed that it contains two exact $(1+1)$D descendants, $(R0)$ and $(Z0)$, obtained by restricting to the distinguished symmetry axes. This exact derivation is one of the main rigorous achievements of the manuscript. The rev3 geometry preserves ridge flatness automatically through the evenness in $(r,z)$, and at the apex $x=0$ the dynamics closes exactly. Thus the finite-time singularity mechanism is already visible at the level of these rigorously derived $(1+1)$D systems before one turns to the full conditional background--remainder analysis.

The weighted energy method developed in Section~\ref{sec:stability} shows that, if a compatible background exists on $[0,T)$ with the coefficient bounds required there and with apex trace governed by the closed apex dynamics, and if the remainder stays subordinate to the background singularity in the detecting norm, then the full solution inherits the same finite-time blow-up.

The main unresolved step is therefore the construction and control of a full background away from the apex, together with the rigidity properties needed to match the apex dynamics and close the nonlinear bootstrap without loss. The blow-up mechanism itself is explicit at the ridge/apex level, but extending that information to a full background with the necessary compatibility bounds remains the decisive open problem.

Even before the final nonlinear theorem is completed, the present formulation already isolates the core components of the analysis. It provides an exact derivation from 3D axisymmetric Euler, a precise ridge/apex blow-up mechanism, a strong linearized stability framework, conditional nonlinear control, and a conditional blow-up transfer statement.

Natural next steps are therefore clear. The first is to prove the full background existence/control theorem compatible with the apex dynamics identified here. The second is to sharpen the detecting norm so that the remainder remains strictly below the background blow-up rate, yielding a closed nonlinear bootstrap. After that, one can revisit modulation of geometric parameters and lower-regularity weighted theories.

\section{Acknowledgements}\label{sec:acknowledgements}
ChatGPT is credited as a substantive contributor to drafting and technical editing; responsibility for correctness remains with the author. The author thanks Prof. Zixiang Zhou of the Department of Mathematics at Fudan University and Prof. Jie Qin of the Department of Mathematics at the University of California, Santa Cruz for their continuous support and encouragement over the years. The author also thanks colleagues and the broader PDE and fluid-dynamics community for stimulating discussions on axisymmetric Euler and CLM-type models.
(Computational assistance: ChatGPT, GPT--5.4 Thinking, OpenAI; sessions in March-April 2026.)

\section{References}
\label{sec:refs}

\end{document}